\newtheorem{remark}{Remark}
\newcommand\bib@setcolor[1]{%
	\ifcsname bib@colored@#1\endcsname
	\expandafter\color\expandafter{\csname bib@colored@#1\endcsname}
	\else
	\normalcolor
	\fi
}
\let\MYoriglatexcaption\caption
\renewcommand{\caption}[2][\relax]{\MYoriglatexcaption[#2]{#2}}
\newtheorem{Theorem}{Theorem}
\begin{document}

\title{\huge Autonomous Driving with RSMA-Enabled Finite Blocklength Transmissions: Ergodic Performance Analysis and Optimization}

\author{Yi~Wang,
~Yingyang~Chen,~\IEEEmembership{Senior Member,~IEEE}
,~Li~Wang,~\IEEEmembership{Senior Member,~IEEE}, \\Donghong Cai, \IEEEmembership{Member,~IEEE}, Xiaofan Li, \IEEEmembership{Senior Member,~IEEE}, and Pingzhi~Fan,~\IEEEmembership{Fellow,~IEEE}

\thanks{Y. Wang and Y. Chen are with the Department of Electronic Engineering, College of Information Science and Technology, Jinan University, Guangzhou 510632, China (e-mail: yiwang@stu2023.jnu.edu.cn; chenyy@jnu.edu.cn).}
\thanks{L. Wang is with the School of Computer Science, Beijing University of Posts
and Telecommunications, Beijing 100876, China (e-mail: liwang@bupt.edu.cn).}
\thanks{D. Cai is with the College of Cyber Security and College of Information Science and Technology, Jinan University, Guangzhou 510632, China (e-mail: dhcai@jnu.edu.cn).}
\thanks{X. Li is with the School of Intelligent Systems Science and Engineering, Jinan University, Zhuhai 519070, China (e-mail: lixiaofan@jnu.edu.cn).}
\thanks{P. Fan is with the Key Laboratory of Information Coding and Transmission, Southwest Jiaotong University, Chengdu 610031, China (e-mail: p.fan@ieee.org).}
    
 }

\markboth{}%
{}


\maketitle

\begin{abstract}
Rate-splitting multiple access (RSMA) is a key technology for next-generation multiple access systems due to its robustness against imperfect channel state information (CSI). This makes RSMA particularly suitable for high-mobility autonomous driving, where ultra-reliable and low-latency communication (URLLC) is essential. To address the stringent requirements, this study enables RSMA finite blocklength (FBL) transmissions and explicitly evaluates the ergodic performance. We derive the closed-form lower bound for the ergodic sum-rate of RSMA, considering vital factors such as the vehicle velocities, vehicle positions, power allocation of each stream, blocklengths, and block error rates (BLERs). To further enhance the ergodic sum-rate while complying with quality of service (QoS) rate constraints, we jointly optimize the global power coefficient, private power distribution, and common rate splitting. Guided by gradient descent, we first adjust the global power coefficient based on its sum-rate solution. This parameter regulates the power state of the common stream, allowing for dynamic activation or deactivation: if active, we optimize the private power distribution and adjust the common rate splitting to meet minimum transmission constraints; if inactive, we use the sequential quadratic programming for private power distribution optimization. Simulation results confirm that our RSMA scheme significantly improves the ergodic performance, reduces blocklength and BLER, surpassing the RSMA counterpart with average private power and space division multiple access (SDMA). Furthermore, our approach is validated to guarantee the rates for users with the poorest channel conditions, thereby enhancing fairness across the network.
\end{abstract}
\vspace{-1mm}
\begin{IEEEkeywords}
Ergodic rate, finite blocklength (FBL), rate-splitting multiple access (RSMA), outdated channel state information at the transmitter (CSIT), ultra-reliable and low-latency communication (URLLC).
\end{IEEEkeywords}
\vspace{-2mm}

%

\section{Introduction}
\vspace{-1mm}
The advent of autonomous vehicles (AVs)—defined as vehicles capable of navigating and operating without human intervention—holds immense potential to revolutionize the transportation sector \cite{10138317}. These vehicles promise safer roads and more sustainable travel, reducing accidents and emissions. Achieving these benefits requires integrating advanced communication technologies and distributed edge computing to enable real-time responsiveness, decision-making, and enhanced computational power \cite{10518091}. Among these, the emerging sixth-generation (6G) technologies play a critical role in this evolution, empowering AVs with the ultra-reliable low-latency communications (URLLC), thereby creating a safer and more efficient transportation network \cite{9779322}.

As AVs operate in highly dynamic and complex environments, achieving the stringent latency and reliability targets of URLLC demands more than traditional communication techniques. In particular, the use of short packets encoded with finite blocklength (FBL) codes has become indispensable, as it directly addresses the fundamental trade-off between latency, reliability, and throughput in real-world systems \cite{10106132}. Unlike conventional infinite blocklength assumptions, FBL coding allows for realistic performance guarantees in low-latency scenarios by optimizing packet size and error probability jointly. This approach significantly minimizes end-to-end latency while maintaining the high reliability required for real-time decision-making in AVs and other critical applications \cite{10173693}.

The high-mobility context significantly influences wireless communication, as applications like base station (BS) to vehicles, high-velocity rail systems, and low earth orbit satellite networks are notably susceptible to mobility-related challenges. Addressing the detrimental effects of Doppler shift and delay induced by high-mobility is thus of paramount importance \cite{9170653}. The concept of rate-splitting was originally introduced in the 1990s by Rimoldi and Urbanke as a theoretical strategy to improve the efficiency of communication over multi-user channels  \cite{485709}. Building upon this foundation, rate-splitting multiple access (RSMA) has recently emerged as a practical and powerful multiple access scheme tailored to modern wireless systems. It manages multi-user interference by partially treating it as noise, similar to space division multiple access (SDMA), and partially decoding it, as in non-orthogonal multiple access (NOMA) \cite{10400885}. Intrinsically, this is achieved through rate-splitting at the transmitter and successive interference cancellation (SIC) at the receivers \cite{10198464}. By effectively balancing interference and noise, RSMA exhibits strong robustness, making it particularly suitable for high-mobility scenarios \cite{9831440}. Unlike SDMA, which tends to degrade under high user mobility, RSMA maintains reliable multi-user connectivity and significantly enhances throughput \cite{9491092}. Given the demonstrated capabilities of RSMA in challenging high mobility environments, there is a compelling necessity to investigate its application in finite blocklength settings to further address the critical Quality of Service (QoS) demands to latency and reliability \cite{10106132}.

Recently, many studies have explored RSMA and its applications due to the above characteristics. Mao \emph{et al.} \cite{9158344} demonstrated that RSMA could capture a broader rate region compared to dirty paper coding in multi-antenna broadcast channels when only partial channel state information (CSI) at the transmitter (CSIT) is accessible. The work in \cite{10858168} investigates uplink RSMA-based cell-free massive multiple-input multiple-output (mMIMO) systems with low-resolution ADC constraints, providing closed-form expressions for rate and energy efficiency. Further, the study in \cite{10720715} derives closed-form secrecy and legitimate rate expressions under hardware impairments and pilot spoofing, and proposes an SCA-based power control scheme to enhance RSMA security in cell-free mMIMO systems. In a high-mobility system, the authors of \cite{9491092} discussed the infinite blocklength issues of achieving the maximum ergodic sum-rate in an underloaded scenario and provided a closed-form power solution. Additionally, Dizdar \emph{et al.} \cite{10316238} discussed the maximum fairness ergodic rate in an overloaded scenario and proposed a closed-form power solution as well. These findings collectively demonstrate the versatility and effectiveness of RSMA in addressing various communication challenges, suggesting its potential as a viable solution for next-generation wireless networks.


\vspace{-3mm}
\subsection{Related Works}
Due to the stringent latency and reliability requirements of URLLC, recent studies have increasingly focused on incorporating RSMA into short-packet communication systems. In \cite{9831048}, RSMA was applied to a downlink SPC system with perfect CSI, where an optimal linear precoder was designed to maximize the instantaneous rate. The authors of \cite{10251998} extended this idea to terahertz (THz)-based SPC systems with SIC imperfections, targeting extreme data rate and latency demands in 5G. In \cite{10478577}, a closed-form expression for the ergodic sum-rate was derived for a high-mobility SPC system under large-scale fading, addressing URLLC constraints. More recent work in \cite{10535307} investigates the application of RSMA in cell-free massive MIMO systems for URLLC scenarios. It derives analytical rate bounds under imperfect CSI and proposes a geometric programming-based algorithm for power allocation. Additionally, \cite{10813419} applied RSMA to wireless networked control systems, jointly optimizing beamforming, rate allocation, and user pairing to meet communication reliability and control stability requirements.

In most existing works of RSMA under conditions of imperfect CSIT, resource allocation like beamforming and power distribution were optimized at instantaneous moments using metrics including generalized mutual information (GMI) \cite{9893376}, short-term average rate (AR) \cite{7555358},\cite{9620766}, and interrupt probability \cite{10032137}. These optimizations are often addressed through iterative algorithms like interior point methods, which generally involve high complexity and result in delays. To facilitate practical applications, the RSMA framework can employ low-complexity precoders, such as zero forcing (ZF) and maximum ratio transmission (MRT), which are based on instantaneous channel conditions\cite{10720715}, \cite{7152864,8008852,7434643,8393474,8907421}. Additionally, it manages power allocation decisions according to long-term channel characteristics, such as large-scale fading. These aspects are widely explored in the literature \cite{9491092},\cite{10316238},\cite{10535307}, \cite{10000714}. This framework has been expanded to include applications with finite blocklengths as shown in \cite{10478577}, where the optimization of power distribution was performed using a complex exhaustion. 

Nevertheless, these prior implementations of RSMA frameworks, which employ low-complexity linear precoding and long-term channel characteristics, are typically based on two assumptions. \textbf{i)} The first assumption involves the uniform power allocation of private streams among users \cite{9491092},\cite{10316238},\cite{10478577},\cite{10000714},\cite{10173504}. While this strategy of equal power distribution is prevalent in multi-user multiple input multiple output (MU-MIMO) systems, as extensively documented in both academic and practical contexts \cite{8008852}, \cite{6476877}, \cite{7888974}, it may not be the most effective approach for the rate-splitting regime. Unlike SDMA, RSMA can improve system performance with non-uniform power allocation to private streams, while ensuring fairness via common rate splitting. \textbf{ii)} The second assumption adopted is that QoS requirements can be met regardless of the rate allocation strategy, as prior studies often neglect the analysis of common rate splitting and QoS constraints \cite{9491092},\cite{10720715}, \cite{10478577}, \cite{10000714}, \cite{10173504}. In dynamic environments with moving vehicles, RSMA can flexibly allocate parts of the common stream for customized data delivery to each user. This is essential for meeting specific QoS demands. However, when the common stream is deactivated, adjustments in the power distribution among various private streams are often necessary to maintain QoS standards, particularly for users in poor channel conditions.
\vspace{-3mm}
\subsection{Motivations and Contributions}
Autonomous driving requires ultra-reliable and low-latency communication, which necessitates modeling short-packet transmissions using FBL theory. Conventional schemes often suffer performance degradation under high mobility, whereas RSMA demonstrates robustness. Therefore, a low-complexity RSMA-FBL framework holds significant practical value for real-world deployment. As indicated in the above references, the framework that employs linear precoding and power optimization based on large-scale fading is anticipated to achieve lower complexity. Previously, these systems often relied on uniform allocation of private streams, overlooking the flexibility of RSMA to satisfy user QoS requirements through common rate splitting.
In this study, we enable RSMA FBL transmissions to support high-mobility autonomous driving and explicitly evaluate the ergodic performance by considering multiple vital factors.
 
A closed-form expression for the ergodic rate is derived, and power allocation and rate splitting are jointly optimized to meet QoS requirements. 
The main contributions of this paper are summarized as follows.

\begin{itemize}
	\item[$\bullet$] We investigate the use of RSMA in highly mobile systems that have specific FBL and block error rate (BLER) requirements for downlink transmissions. In our approach, common streams are handled using random precoding, while private streams utilize ZF precoding. Instead of adhering to average allocations for private stream power and common rate splitting, the proposed scheme offers flexible adjustments for these parameters.
	\item[$\bullet$] We derive the closed-form lower bound to the ergodic rate of RSMA, considering additional factors such as vehicle positions, the power of each stream, blocklengths, and BLERs. Utilizing the continuous nature of the closed-form expressions, we apply these forms to the optimization process using gradient-based methods. This approach avoids the errors typically introduced by the discontinuity of sample average methods.   
	\item[$\bullet$]
    We formulate an optimization framework for RSMA under imperfect CSIT and QoS constraints. We first optimize the global power coefficient for the common stream. When activated, we sequentially optimize the private power distribution and common rate splitting. If deactivated, we use sequential quadratic programming for private power optimization. This strategy facilitates targeted adjustments based on the status of the common stream, enhancing the adaptability of the system.
	\item[$\bullet$]We validate the tightness of the derived lower bound by comparing it with other closed-form expressions and evaluate the performance of the proposed RSMA scheme through numerical simulations. The proposed scheme is compared with SDMA and RSMA using average private power distribution and common rate splitting. The results show that the proposed RSMA scheme significantly improves both the ergodic sum-rate and minimum vehicle rate, enhancing fairness and sum-rate performance.
\end{itemize}

The rest of the paper is organized as follows. Section~\ref{sec:system} presents the system model. Section~\ref{sec:formulation} provides the closed-form ergodic rate analysis. In Section~\ref{problem}, we formulate the problem of maximizing the ergodic sum-rate and use the closed-form expressions to optimize the allocation of global power, private power, and common rate splitting. Section~\ref{result} presents the numerical results, and Section~\ref{conclusion} concludes the paper.

\textit{Notation:} Vectors are represented by bold lowercase letters. The operators $|\cdot|$ and $\|\cdot\|$ signify the cardinality of a set or the absolute value of a scalar, and the $l_2$-norm of a vector, respectively. $\mathbf{a}^T$ and $\mathbf{a}^H$ denote the transpose and Hermitian transpose of vector $\mathbf{a}$, respectively. The term $\mathcal{CN}(0, \sigma^2)$ describes the circularly symmetric complex gaussian distribution with zero mean and variance $\sigma^2$. ${\mathbf{I}}_N$ stands for an $N \times  N$ identity matrix. The rounding operation is indicated by $\lfloor \cdot \rceil$. Natural logarithms are denoted by $\ln(\cdot) = \log_e(\cdot)$. $Gamma(D, \theta)$ denotes the distribution with the probability density function $f(x) = \frac{x^{D-1} e^{-x/\theta}}{\Gamma(D) \theta^D}$, where $D$ and $\theta$ are the shape and scale parameters, respectively. $\Gamma(D) = \int_0^\infty t^{D-1} e^{-t} \ \mathrm{d}t$ represents the gamma function, and $\Gamma'(D)$ denote its derivative. ${E_v}(x) = \int_1^\infty {{e^{ - tx}}{t^{ - v}} \ {\rm{d}}t}$, with $x > 0$ and $v \in \mathbb{R}$, denotes the generalized exponential integral function. The operator $\mathbb{E}\{\cdot\}$ denotes the expectation over the distribution of random variables.

%
\vspace{-2mm}
\section{System Model}
\label{sec:system}
\vspace{-1mm}
\setlength{\abovecaptionskip}{-0.05cm}
\setlength{\belowcaptionskip}{-0.2cm}
\label{Sec_SysMod}

Fig. 1 illustrates a typical scenario in high-mobility autonomous driving. The BS is equipped with \( N_t \) antennas in a downlink multiple-input single-output (MISO) broadcast setup and serves \( K \) high-mobility single-antenna AVs, where \( N_t \geqslant K \). These vehicles are indexed by \( \mathcal{K} = \{1, 2, \ldots, K\} \). A single-layer RSMA strategy is adopted to reduce receiver complexity \cite{2018Rate}. Each AV is equipped with a single-layer SIC for decoding. The message for each vehicle, denoted as \( W_k \) for \( k \in \mathcal{K} \), is divided into two components: the common part \( W_{c,k} \) and the private part \( W_{p,k} \). At the BS transmitter, all common parts \( \{W_{c,1}, \ldots, W_{c,K}\} \) are collectively encoded into a common stream \( s_c \). This common stream can be decoded by all vehicles and is used to convey multicast messages, such as platoon coordination and model distribution \cite{10032163}. Each private message \( W_{p,k} \) is encoded into a private stream \( s_k \) and would be decoded independently by each vehicle after the common stream is decoded\footnote{RSMA is a relatively recent technique and has not yet been explicitly incorporated into 3GPP standards. However, several of its underlying components, such as MU-MIMO and multicast, have already been partially standardized \cite{10038476}.}. Generally, the private part can be used for unicast information delivery that supports various services, such as path planning and remote driving \cite{10130367}. The composite stream vector to be transmitted is represented as \( \mathbf{s} = [s_c, s_1, s_2, \ldots, s_K]^T \). Linear precoding is applied across all streams. The transmitted signal at the BS is expressed as

\begin{align}
	\label{transmit signal}
	{\mathbf{x}} = \sqrt {{P}{(1-t)}} {{\mathbf{p}}_c}{s_c} + \sqrt{Pt}\sum_{{k \in {\mathcal K}}} {\sqrt {\mu_k} {{\mathbf{p}}_k}{s_k}}, 
\end{align}
where ${{\mathbf{p}}_c} \in {\mathbb{C}^{{N_t \times 1}}}$ and ${{\mathbf{p}}_k} \in {\mathbb{C}^{{N_t \times 1}}}$ are the precoders pertainting to the common stream and the $k$-th private stream, respectively, satisfying ${\left\| {{{\mathbf{p}}_c}} \right\|^2} = {\left\| {{{\mathbf{p}}_k}} \right\|^2} = 1,\forall k\in{\mathcal K}$. Here $P$ represents the total transmit power at the BS, and the composite stream vector ${{\mathbf{s}}} \in \mathbb{C}^{{(K+1)} \times 1}$ ensures \mbox{$\mathbb{E}\left\lbrace \mathbf{s}\mathbf{s}^{H}\right\rbrace =\mathbf{I}$}. The global power coefficient $0 \leq t \leq 1$ governs the division of power between the common and total private streams, where $1-t$ is the power ratio for common stream. The private power distribution $0 \leq \mu_{k} \leq 1$, with $\sum_{k=1}^{K}\mu_k=1,$ specifies the distribution of power across private streams allocated to vehicle-$k$ and define $\mathbf{\boldsymbol\mu}=[\mu_{1}, \mu_{2}, \ldots, \mu_{K}]$. Hence $t\cdot\mu_k$ is the power ratio for the $k$-th private stream. For the sake of reducing complexity, we employ random precoding for the common stream and ZF precoding for individual streams.

\begin{figure}[t]\centering\includegraphics[width=1\linewidth]{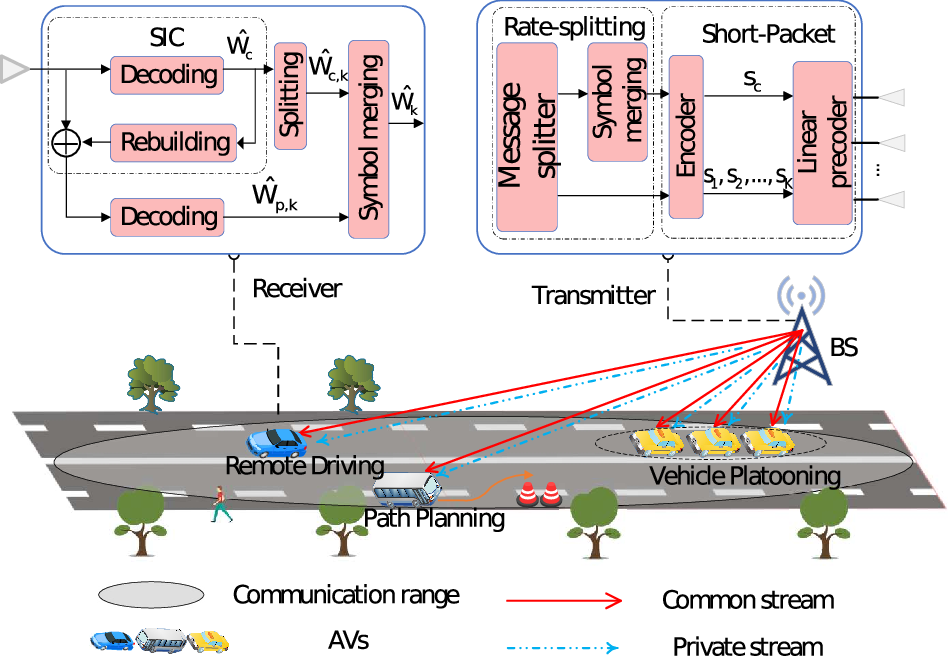}
	\vspace{1.5mm}
	\caption{RSMA-assisted FBL transmissions in a high-mobility autonomous driving scenario.}
\vspace{-2mm}	\label{fig:example}
\end{figure}
The received signal at vehicle-$k$ is given by
\begin{align}
	{y}_{k}=\sqrt{\xi_{k}}\mathbf{h}_{k}^{H}\left[ m \right]{\mathbf{x}}+z_{k}, \quad k\in\mathcal{K},  
\end{align}
where ${{\mathbf{h}}_k}[m] \in {\mathbb{C}^{{N_t} \times 1}}$ represents the small-scale fading at time instant $m$, $z_{k} \sim \mathcal{CN}(0,\sigma^2)$ indicates the additive white gaussian noise (AWGN) at vehicle-$k$, with $\sigma^2$ representing the corresponding noise power. And $\xi_{k}$ represents the average power drop between the BS and vehicle-$k$ due to the large-scale fading. 

To simplify the subsequent analysis, we normalize the noise by scaling the entire signal as
\begin{align}
	\quad y_{k}^{\text{scaled}}=\sqrt{\zeta_{k}}\mathbf{h}_{k}^{H}\left[ m \right]{\mathbf{x}}+n_{k}, \quad k\in\mathcal{K},  
\end{align}
where $\zeta_{k}=\xi_{k}/\sigma^2$, and \( n_{k} \sim \mathcal{CN}(0,1) \) represents the normalized noise.

Initially, each vehicle decodes the common stream \( s_c \), regarding all private streams \{\( s_1,s_2..., s_K \)\} as noise. Thus, the signal-to-interference plus noise ratio (SINR) for the common stream at vehicle-\( k \) can be written as
\begin{align}
	\label{SINR_c}
	{\Gamma _{c,k}} = \frac{{{P}{(1-t)}{\zeta_{k}}{{\left| {{\mathbf{h}}_k^H[m]{{\mathbf{p}}_c}} \right|}^2}}}{{{P}{t}{\zeta_{k}}\sum_{j \in {\mathcal K}} {{\mu_j}{{\left| {{\mathbf{h}}_k^H[m]{{\mathbf{p}}_j}} \right|}^2} + 1} }}. 
\end{align}
Generally, each vehicle applies SIC, decoding the common stream ${s_c}$ and subtracting it from the received signal \( y_k \). To ensure that all users successfully decode ${s_c}$, the SINR of the weakest user, i.e., \(\min_{k \in \mathcal{K}},\Gamma_{c,k}\) should be focused. Since the system targets ultra-reliable communications, we assume perfect SIC, meaning the common stream is decoded and removed without error\footnote{Although the assumption of perfect SIC ignores potential decoding errors in the common stream that could affect private stream decoding, it is a commonly adopted assumption in RSMA literature (e.g.,~\cite{10478577}) to characterize theoretical performance upper bounds.}. Subsequently, vehicle-\( k \) decodes its private stream ${s_k}$, treating other private streams intended as background noise. The SINR for decoding its private stream at vehicle-\( k \) is
\begin{align}
	\label{SINR_p}
	{\Gamma _{p,k}} = \frac{{{P}{t}{\zeta_{k}}{\mu_k}{{\left| {{\mathbf{h}}_k^H[m]{{\mathbf{p}}_k}} \right|}^2}}}{{{P}{t}{\zeta_{k}}\sum\nolimits_{j \in {\mathcal K},j \ne k} {{\mu_j}{{\left| {{\mathbf{h}}_k^H[m]{{\mathbf{p}}_j}} \right|}^2} + 1} }}. 
\end{align}
After decoding both the common and private streams, vehicle-\( k \) synthesizes the full message by integrating the decoded common message \(\widehat W_{c,k} \) with the decoded private part \(\widehat W_{p,k} \). Following \cite{5452208}, which examined finite blocklength systems, we calculate the instantaneous rates \( R_{c}^{(m)} \) for the common stream and \( R_{p,k}^{(m)} \) for the private stream at time instant \( m \) as follows\begin{align}
	\label{R_ck}
{R_{c}^{(m)} } \approx {C} ({\min_{k\in\mathcal{K}}\Gamma_{c,k}}) - \sqrt {\frac{{{V} ({\min_{k\in\mathcal{K}}\Gamma_{c,k}})}}{{{l_c}}}} {{Q} ^{ - 1}}({\beta _{c,k}}), 
\end{align}
\begin{align}
	\label{R_pk}
	{R_{p,k}^{(m)} } \approx {C} ({\Gamma _{p,k}}) - \sqrt {\frac{{{V} ({\Gamma _{p,k}})}}{{{l_k}}}} {{Q} ^{ - 1}}({\beta _{p,k}}), 
\end{align}
where $\{{l_c},{l_1},...,{l_K}\}$ are the respective blocklengths of streams $\{{s_c},{s_1},...,{s_K}\}$. The common transmission rate shown in \eqref{R_ck} ensures that all vehicles can decode the common stream. The parameters $\beta_{c,k}$ and $\beta_{p,k}$ represent the BLERs requirements for vehicle-$k$ to decode the common and private messages, respectively. The function $C(x) = \log_2(1 + x)$ represents the Shannon capacity. The function ${Q}^{-1}(\cdot)$ corresponds to the inverse of the Gaussian $Q$ function and $V(\cdot)$ corresponds to the channel dispersion parameter, with $V(x) = (1 - (1 + x)^{-2})(\log_2e)^2$. 

Therefore, the ergodic rates for both the common and private streams are expressed as \( R_c = \mathbb{E} \left\{ R_c^{(m)} \right\} \), \( R_k = \mathbb{E} \left\{ R_{p,k}^{(m)} \right\} \), where \( \mathbb{E}\{\cdot\} \) denotes the expectation over small-scale fading and other random variables. Then, the ergodic sum-rate is given by
\begin{align}
	\label{eqn:sumrate}
	R_{sum}=R_{c}+\sum_{k=1}^{K}R_{k}.
\end{align}

We assume that both the vehicle mobility and CSI reporting delays lead to imperfect CSIT. Therefore, the channel coefficient of vehicle-$k$ at the time $m$ can be given as \cite{5723047}
\begin{align}
	\label{h_m}
	{{\mathbf{h}}_k}[m] = \sqrt {{\varepsilon ^2}} {{\mathbf{h}}_k}[m - 1] + \sqrt {1 - {\varepsilon ^2}} {{\mathbf{e}}_k}[m], 
\end{align}
where \( \mathbf{h}_k[m] \) is identified as a Rayleigh flat fading channel exhibiting spatial uncorrelation, with each component following a \(\mathcal{CN}(0, 1)\) distribution independently and identically. Similarly, the estimation error \( \mathbf{e}_k[m] \) consists of components also distributed as \(\mathcal{CN}(0, 1)\). The time correlation coefficient is given by $\varepsilon = J_0(2\pi f_{D,k} T), (0 \leq \varepsilon \leq 1)$, where \(J_0(\cdot)\) is the zeroth-order Bessel function. This coefficient quantifies the dependency between \(\mathbf{h}_k[m]\) and \(\mathbf{h}_k[m-1]\) according to the Jakes' model \cite{proakis2008digital}.
 The parameter \( T \) represents the channel coherent interval, while the maximum Doppler frequency \( f_{D,k} \) is determined using $f_{D,k}={v_k f_c}/{c}$, where $ v_{k} $ is the vehicle velocity, \( f_c \) is the carrier frequency, and \( c=3\times10^8\rm{m/s} \) is the speed of light. Additionally, we assume a large-scale fading model \( L(d_k) = \rho_0 - 10\alpha_0\lg(d_k/d_0) \) that remains constant throughout the entire coherence interval, where $\rho_0$ specifies the path loss at the standard distance $d_0$, $d_k$ represents the distance between BS and vehicle-$k$. $\alpha_0$ indicates the path loss exponent. The large-scale fading term $\xi_{k} = 10^{L(d_k)/10}$ adjusts accordingly. And we assume that large-scale fading remains constant throughout the entire transmission period.

\section{Closed-Form Ergodic Rate Analysis}
\label{sec:formulation}
\vspace{-1mm}
This section is dedicated to deriving the closed-form lower-bound for the ergodic rate of transmission streams under arbitrary transmission antenna configurations, BLER, and blocklength. Firstly, the distributions of commonly used random variables are provided. Subsequently, utilizing their approximations, closed-form lower bounds for the rates of common and private streams are further derived for use in the optimization discussed in the following subsection.
\vspace{-3mm}
\subsection{Gamma Distribution for Precoding}
We are using ZF precoders for the private streams. Explicitly, the ZF precoder \(\mathbf{p}_k\) is isotropically distributed and operates independently from the Gaussian noise \(\mathbf{e}_j^H[m]\), for all $j \in \mathcal{K} \backslash k$, resulting in a random variable (r.v.) with exponential distribution and unit mean, hence \(\left|\mathbf{e}_j^H[m]\mathbf{p}_k\right|^2 \sim Gamma(1, 1)\). Noticeably, channel knowledge at the transmitter is confined to the channel vector at the previous time instance \( m-1 \), i.e., \(\mathbf{h}_k[m - 1]\). Consequently, we have \(\left|\mathbf{h}_j^H[m - 1]\mathbf{p}_k\right| = 0\), where $j \in \mathcal{K} \backslash k $, and \(\left|\mathbf{h}_k^H[m - 1]\mathbf{p}_k\right|^2\) follows \(Gamma(N_t - K + 1, 1)\), as outlined in \cite{6967804}. The common precoder \(\mathbf{p}_c\) is assumed to function as a stochastic beamformer unrelated to \(\mathbf{h}_k[m - 1]\), \(\mathbf{e}_k[m]\), ensuring \(\left|\mathbf{h}_k^H[m]\mathbf{p}_c\right|^2 \sim Gamma(1,1)\) \cite{7152864}. For simplification, the expression \(\left|\mathbf{h}_k^H[m]\mathbf{p}_j\right|^2\) is approximated for all \(j, k \in \mathcal{K}\) as
\begin{align}
	\label{Approximation1}
	{\small {\left| {{\mathbf{h}}_k^H[m]{{\mathbf{p}}_k}} \right|^2} \approx {\varepsilon ^2}{\left| {{\mathbf{h}}_k^H[m - 1]{{\mathbf{p}}_k}} \right|^2} + (1 - {\varepsilon ^2}){\left| {{\mathbf{e}}_k^H[m]{{\mathbf{p}}_k}} \right|^2}, }
\end{align}
\begin{equation}
	\label{Approximation2}
	{\small 	\begin{split}
			{\left| {{\mathbf{h}}_k^H[m]{{\mathbf{p}}_j}} \right|^2} &\approx {\varepsilon ^2}{\left| {{\mathbf{h}}_k^H[m - 1]{{\mathbf{p}}_j}} \right|^2} + (1 - {\varepsilon ^2}){\left| {{\mathbf{e}}_k^H[m]{{\mathbf{p}}_j}} \right|^2}\\
			&= (1 - {\varepsilon ^2}){\left| {{\mathbf{e}}_k^H[m]{{\mathbf{p}}_j}} \right|^2}, j\ne k. 
	\end{split}}
\end{equation}
Notice that the approximations made in \eqref{Approximation1} and \eqref{Approximation2} disregard the components involving both ${\mathbf{h}}_k^H[m - 1]$ and ${{\mathbf{e}}_k}[m]$ \cite{6967804}. Since most of the variables here follow a Gamma distribution, we introduce a method \cite{6967804} to approximate the random variable \( Z = \sum_i A_i \), where each \( A_i \) follows a Gamma distribution, i.e., \( A_i \sim {Gamma}(D_{A_i}, \theta_{A_i}) \). An estimation of $Z$ is given by the r.v. $\widetilde Z$, which follows $Gamma(\widetilde D_Z, \widetilde \theta_Z)$, with
\begin{align}
	\label{gammajinsi}
	\widetilde D_Z = \frac{{{{\left( {\sum {{D_{A_i}}} {\theta _{A_i}}} \right)}^2}}}{{\sum {{D_{A_i}}} \theta _{A_i}^2}},\widetilde \theta_{A_i}  = \frac{{\sum {{D_{A_i}}} \theta _{A_i}^2}}{{\sum {{D_{A_i}}} {\theta _{A_i}}}}. 
\end{align}
\vspace{-4mm}
\subsection{Lower Bound for $R_c$}
We first propose the following approximation to obtain a closed-form lower bound for the ergodic rate of the common stream, i.e., $R_c$.
\newtheorem{lemma}{Lemma}
\newtheorem{definition}{Definition}
\begin{definition}
	\label{empirical approximation}
	Define a new random variable \( Y_{c,k} \) as follow
\begin{align}
	\label{Cc_jinsi}
	{Y _{c,k}} = \frac{{{{\left| {{\mathbf{h}}_k^H[m]{{\mathbf{p}}_c}} \right|}^2}}}{{\frac{{P}{t}\zeta_{k}}{K}\sum_{j \in {\mathcal K}} {{{\left| {{\mathbf{h}}_k^H[m]{{\mathbf{p}}_j}} \right|}^2} + 1} }}. 
\end{align} 
Note that \( {P}(1-t)\zeta_{k}Y_{c,k} \) is equivalent to \( \Gamma_{c,k} \) in \eqref{SINR_c}  with \( \mu_j = 1/K \) for all \( j \in \mathcal{K} \), and then we make the following approximations
	\begin{align}
		\label{Cminc}
\mathbb {E}\left \lbrace{C{(\min_{k\in\mathcal{K}}\Gamma_{c,k})}}\right \rbrace \approx \mathbb {E}\left \lbrace{C{({P}{(1-t)}\min_{k\in\mathcal{K}}(\zeta_{k}Y_{c,k}))}}\right \rbrace,
	\end{align}	
		\begin{align}
			\label{vminc}
			\begin{gathered}
			\mathbb {E}\left \lbrace{\sqrt {\frac{{{V}(\min_{k\in\mathcal{K}}\Gamma_{c,k})}}{{{l_c}}}} {{Q} ^{ - 1}}({\beta _{c,k}})}\right \rbrace \hfill \\
				\approx \mathbb {E}\left \lbrace{\sqrt {\frac{{{V} ({P}{(1-t)}{\min_{k\in\mathcal{K}}(\zeta_{k}Y_{c,k}}))}}{{{l_c}}}} {{Q} ^{ - 1}}({\beta _{c,k}})}\right \rbrace.   \hfill \\ 
			\end{gathered}
		\end{align}
This approximation is based on extensive experimental data and will be validated through the simulation results presented in Fig.~\ref{errors}.
\end{definition}

\begin{lemma}
	\label{Xjinsi}
	The r.v. $X = \sum\nolimits_{j \in {\mathcal K}} {{{\left| {{\mathbf{h}}_k^H[m]{{\mathbf{p}}_j}} \right|}^2}} $ can be closely approximated by a r.v. $\widetilde X$ with the distribution $Gamma({\widetilde D},{\widetilde \theta})$, where
\begin{align}
	\widetilde{D}&=\frac{\left[ \varepsilon^{2}(N_{t}+1)+(1-2\varepsilon^{2})K\right] ^{2}}{\varepsilon^{4}(N_{t}+1)+(1-2\varepsilon^{2})K}, \nonumber \\
	\widetilde{\theta}&=\frac{\varepsilon^{4}(N_{t}+1)+(1-2\varepsilon^{2})K}{\varepsilon^{2}(N_{t}+1)+(1-2\varepsilon^{2})K}.
	\label{eqn:dandtheta}
\end{align}
\end{lemma}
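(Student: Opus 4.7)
The plan is to decompose $X$ into a sum of (approximately) independent Gamma-distributed summands by invoking the approximations in \eqref{Approximation1}--\eqref{Approximation2}, and then to apply the moment-matching formula in \eqref{gammajinsi} to read off $\widetilde{D}$ and $\widetilde{\theta}$. First I would split $X = |\mathbf{h}_k^H[m]\mathbf{p}_k|^2 + \sum_{j \in \mathcal{K}\setminus k} |\mathbf{h}_k^H[m]\mathbf{p}_j|^2$. For the diagonal term, approximation \eqref{Approximation1} yields two components: $\varepsilon^2|\mathbf{h}_k^H[m-1]\mathbf{p}_k|^2 \sim Gamma(N_t-K+1,\varepsilon^2)$ by the ZF precoder property noted just above, together with $(1-\varepsilon^2)|\mathbf{e}_k^H[m]\mathbf{p}_k|^2 \sim Gamma(1,1-\varepsilon^2)$. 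For each of the $K-1$ off-diagonal terms, approximation \eqref{Approximation2} collapses the expression to $(1-\varepsilon^2)|\mathbf{e}_k^H[m]\mathbf{p}_j|^2 \sim Gamma(1,1-\varepsilon^2)$, since $\mathbf{p}_j$ is isotropic and independent of the estimation noise $\mathbf{e}_k[m]$. Thus $X$ is approximated as the sum of one $Gamma(N_t-K+1,\varepsilon^2)$ variable and $K$ variables of type $Gamma(1,1-\varepsilon^2)$.

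Next I would plug these shape/scale pairs into \eqref{gammajinsi}. The mean-type aggregate evaluates to $\sum D_{A_i}\theta_{A_i} = (N_t-K+1)\varepsilon^2 + K(1-\varepsilon^2)$, which rearranges to $\varepsilon^2(N_t+1)+K(1-2\varepsilon^2)$. The second-moment aggregate evaluates to $\sum D_{A_i}\theta_{A_i}^2 = (N_t-K+1)\varepsilon^4 + K(1-\varepsilon^2)^2$, which after expanding $(1-\varepsilon^2)^2$ and collecting terms simplifies to $\varepsilon^4(N_t+1)+K(1-2\varepsilon^2)$. Substituting these two sums into the ratios defining $\widetilde{D}_Z$ and $\widetilde{\theta}_Z$ in \eqref{gammajinsi} directly reproduces the claimed closed forms for $\widetilde{D}$ and $\widetilde{\theta}$.

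The main obstacle will be careful bookkeeping and the justification of the underlying assumptions rather than any analytical depth. In particular, one must recognize that the shape $N_t-K+1$ of the signal-carrying component originates from the ZF precoder projecting out the $K-1$ interfering channel directions, and one must be comfortable applying \eqref{gammajinsi} even though the $K$ summands share the common noise vector $\mathbf{e}_k[m]$ and are therefore only approximately independent. This last simplification is the same moment-matching heuristic already invoked in \eqref{gammajinsi} and is consistent with the stated accuracy level of the surrounding analysis.
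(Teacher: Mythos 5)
Your proposal is correct and follows essentially the same route as the paper's Appendix A: substitute \eqref{Approximation1}--\eqref{Approximation2}, identify the signal component as $Gamma(N_t-K+1,\varepsilon^2)$ and the $K$ error components as $Gamma(1,1-\varepsilon^2)$ (the paper merely groups these into a single $Gamma(K,1-\varepsilon^2)$, which yields identical moment sums), and apply the moment-matching formula \eqref{gammajinsi}. Your algebraic simplifications of $\sum D_{A_i}\theta_{A_i}$ and $\sum D_{A_i}\theta_{A_i}^2$ check out and reproduce \eqref{eqn:dandtheta} exactly.
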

\begin{proof}
See Appendix A
\end{proof}
After deriving the distributions of the above r.v.s, we give the approximate cumulative distribution function (CDF) of ${{P}{(1-t)}\min_{k\in\mathcal{K}}Y_{c,k}}$ in the following Lemma. 
\begin{lemma}
		\label{lemma3}
	The r.v. ${P}{(1-t)}\min_{k\in\mathcal{K}}{Y_{c,k}}$ can be closely approximated by a r.v. ${\widetilde \Gamma _{c}}$  with the CDF
	\begin{align}
		\label{CDF_c}
		{{F} _{{{\widetilde \Gamma }_{c}}}}(y) = 1 - \frac{{{e^{ -\sum_{k=1}^{K}\frac{1}{\zeta_{k}} \frac{y}{{{P}{(1-t)}}}}}}}{{{{\left( {y\frac{{{{\widetilde\theta}{t}}}}{{{K}{(1-t)}}} + 1} \right)}^{{{\widetilde D}{K}}}}}}, 
	\end{align}
	where ${y\in\text{[0,}}\infty {\text{)}}$, ${\widetilde D}$ and ${\widetilde \theta}$ are specified in \eqref{eqn:dandtheta}. 
\end{lemma}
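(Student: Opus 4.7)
The plan is to obtain the CDF of $P(1-t)\min_{k\in\mathcal{K}}(\zeta_k Y_{c,k})$ — the surrogate r.v.\ identified in Definition~\ref{empirical approximation} — via a standard condition-and-integrate argument, then invoke approximate independence across vehicles to pass from the per-user CCDFs to the CDF of the minimum. The three ingredients are: (i) the exponential distribution of the desired-signal power $U_k \triangleq |\mathbf{h}_k^H[m]\mathbf{p}_c|^2 \sim Gamma(1,1)$, (ii) the Gamma surrogate $\widetilde{X}_k \sim Gamma(\widetilde{D},\widetilde{\theta})$ for the aggregate interference $X_k \triangleq \sum_{j\in\mathcal{K}}|\mathbf{h}_k^H[m]\mathbf{p}_j|^2$ delivered by Lemma~\ref{Xjinsi}, and (iii) the Gamma moment-generating function (MGF).

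Writing $Z_k \triangleq P(1-t)\zeta_k Y_{c,k}$ for brevity, the first step is to condition on $\widetilde{X}_k$ and use the unit-rate exponential tail of $U_k$ to obtain
\begin{equation*}
P(Z_k > y \mid \widetilde{X}_k) = P\!\left(U_k > \frac{y\bigl(\tfrac{Pt\zeta_k}{K}\widetilde{X}_k + 1\bigr)}{P(1-t)\zeta_k}\right) = e^{-\frac{y}{P(1-t)\zeta_k}}\, e^{-\frac{yt\,\widetilde{X}_k}{K(1-t)}}.
\end{equation*}
Next, taking expectation with respect to $\widetilde{X}_k$ by means of the Gamma MGF $\mathbb{E}[e^{-s\widetilde{X}_k}] = (1+s\widetilde{\theta})^{-\widetilde{D}}$ yields the unconditional per-user CCDF
\begin{equation*}
P(Z_k > y) \approx \frac{e^{-\frac{y}{P(1-t)\zeta_k}}}{\left(1 + \frac{yt\widetilde{\theta}}{K(1-t)}\right)^{\widetilde{D}}}.
\end{equation*}
Finally, treating $\{Z_k\}_{k\in\mathcal{K}}$ as (approximately) mutually independent, I would form the product of these per-user CCDFs and subtract from unity; the exponents in the numerators collect into $\sum_{k=1}^{K}\frac{1}{\zeta_k}\cdot\frac{y}{P(1-t)}$ while the identical denominators multiply to the power $\widetilde{D}K$, reproducing \eqref{CDF_c} verbatim.

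The main obstacle I anticipate is rigorously justifying the cross-user independence invoked in the last step. Both the shared stochastic common precoder $\mathbf{p}_c$ and the jointly designed ZF precoders $\{\mathbf{p}_k\}$ couple the random quantities $\{U_k,\widetilde{X}_k\}$ across users, so the product form of $P(\min_k Z_k > y)$ is only correct up to the same decoupling approximation already embedded in Definition~\ref{empirical approximation} and in the moment-matched Gamma fit of Lemma~\ref{Xjinsi}. This is precisely why the statement is phrased as ``closely approximated by'' rather than as an exact identity, and why the tightness is subsequently checked numerically. A secondary subtlety is that the approximation of $X_k$ in \eqref{Approximation1}--\eqref{Approximation2} drops the cross terms between $\mathbf{h}_k[m-1]$ and $\mathbf{e}_k[m]$; I would carry this drop through implicitly by replacing $X_k$ with $\widetilde{X}_k$ at the MGF step, rather than re-deriving it.
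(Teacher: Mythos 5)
Your proposal is correct and follows essentially the same route as the paper's Appendix B: condition on the Gamma-approximated interference from Lemma~\ref{Xjinsi}, use the unit-exponential tail of $|\mathbf{h}_k^H[m]\mathbf{p}_c|^2$, average out the interference (you via the Gamma MGF, the paper via the explicit integral with a change of variables — the same computation), and then take the product of per-user CCDFs under the cross-user independence assumption. Your explicit flagging of the numerator/denominator and cross-user independence approximations matches assumptions the paper also invokes (citing \cite{9491092}), so there is no gap.
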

\begin{proof}
	See Appendix B
\end{proof}
Getting the CDF of ${\widetilde \Gamma _{c}}$, we give the closed-form approximations of 
$\mathbb {E}\left \lbrace{C{(\widetilde\Gamma_{c})}}\right \rbrace$ and $\mathbb {E}\left \lbrace{{\widetilde\Gamma_{c}}}\right \rbrace$ in the following Lemma.
\begin{lemma}
	\label{lemma4}
  The CDF of the r.v. ${\widetilde \Gamma _{c}}$ is equivalent to
	\begin{align}
		{{F} _{{\widetilde \Gamma _{c}}}}(y) = 1 - \frac{{{e^{ - {C_1}{y}}}}}{{{{\left( {{{{C_2}{y}}} + 1} \right)}^{{{D}}}}}}, 
    \end{align}
	where $C_1={\sum_{k=1}^{K}\frac{1}{\zeta_{k}{P}{(1-t)}}}$,
	$C_2=\frac{{{{\widetilde\theta}{t}}}}{{{K}{(1-t)}}}$, and $D=\left \lfloor{{\widetilde D}{K}}\right\rceil$. The operator \( \left\lfloor \cdot \right\rceil \) denotes rounding to the nearest integer. Then, $\mathbb{E}\left\lbrace{C{(\widetilde\Gamma_{c})}}\right \rbrace$ can be approximated by
	\begin{align}
    \label{E_C_Gamma_c}
		\mathbb{E}\left[ {{{C}}({\widetilde\Gamma_{c}})} \right] \approx \frac{{{e^{\frac{{{C_1}}}{{{C_2}}}}}}}{{\ln 2}}\Psi \left( { {{{D}}}  } \right), 
	\end{align}
where $\Psi (n) = \frac{{{e^{\frac{{{C_1}}}{{{C_2}}}\left( {{C_2} - 1} \right)}}}}{{{{\left( {1 - {C_2}} \right)}^n}}}{E_1}({C_1}) - \sum\nolimits_{i = 1}^n {\frac{{{E_i}\left( {\frac{{{C_1}}}{{{C_2}}}} \right)}}{{{{\left( {1 - {C_2}} \right)}^{n + 1 - i}}}}}$. And $\mathbb{E}\left\lbrace{{\widetilde\Gamma_{c}}}\right \rbrace$ can be equivalent to
	\begin{align}
    \label{E_Gamma_c}
	\mathbb{E}\left[ {{{\widetilde \Gamma }_{c}}} \right] = \frac{{{e^{\frac{{{C_1}}}{{{C_2}}}}}}}{{{C_2}}}\int_1^\infty  {{e^{ - \frac{{{C_1}}}{{{C_2}}}z}}{z^{ - {{D}}}} \ \mathrm{d}z}  = \frac{{{e^{\frac{{{C_1}}}{{{C_2}}}}}}}{{{C_2}}}{E_{{{ D}}}}(\frac{{{C_1}}}{{{C_2}}}), 
\end{align}
where ${E_v}(x) = \int_1^\infty  {{e^{ - tx}}{t^{ - v}}{\rm{d}}t} ,x > 0,v \in \mathbb{R}$ is the generalized exponential-integral.		
\end{lemma}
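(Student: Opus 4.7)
My plan is to prove the three claims in Lemma~\ref{lemma4} in sequence: first recover the stated form of the CDF, then evaluate $\mathbb{E}[\widetilde\Gamma_c]$ (which is the easier of the two expectations), and finally tackle $\mathbb{E}[C(\widetilde\Gamma_c)]$ via partial fractions.

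For the CDF, I would simply take the expression in Lemma~\ref{lemma3}, relabel the aggregate constants $C_1=\sum_{k=1}^K\frac{1}{\zeta_k P(1-t)}$ and $C_2=\frac{\widetilde\theta t}{K(1-t)}$, and replace the non-integer exponent $\widetilde D K$ by its nearest integer $D=\lfloor \widetilde D K\rceil$. The rounding is what lets the partial-fraction argument below go through with a finite sum, so this step is essentially a setup for the rest.

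For $\mathbb{E}[\widetilde\Gamma_c]$, I would use the standard tail-integral identity $\mathbb{E}[\widetilde\Gamma_c]=\int_0^\infty (1-F_{\widetilde\Gamma_c}(y))\,\mathrm{d}y$, which reduces the problem to
\begin{equation*}
\mathbb{E}[\widetilde\Gamma_c]=\int_0^\infty \frac{e^{-C_1 y}}{(C_2 y+1)^D}\,\mathrm{d}y.
\end{equation*}
The substitution $z=C_2 y+1$ (so $y=(z-1)/C_2$ and $\mathrm{d}y=\mathrm{d}z/C_2$) gives $\frac{e^{C_1/C_2}}{C_2}\int_1^\infty e^{-(C_1/C_2)z}z^{-D}\,\mathrm{d}z$, which is exactly $\frac{e^{C_1/C_2}}{C_2}E_D(C_1/C_2)$ by the definition of the generalized exponential integral.

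The main work—and the step I expect to be the principal obstacle—is $\mathbb{E}[C(\widetilde\Gamma_c)]$. Starting from $\mathbb{E}[\log_2(1+\widetilde\Gamma_c)]=\frac{1}{\ln 2}\int_0^\infty \frac{1-F_{\widetilde\Gamma_c}(y)}{1+y}\,\mathrm{d}y$, the task is to evaluate
\begin{equation*}
I \;=\; \int_0^\infty \frac{e^{-C_1 y}}{(1+y)\,(C_2 y+1)^D}\,\mathrm{d}y.
\end{equation*}
Because $D$ is now an integer, the rational factor admits the partial-fraction expansion
\begin{equation*}
\frac{1}{(1+y)(C_2 y+1)^D}=\frac{1}{(1-C_2)^D}\cdot\frac{1}{1+y}-\sum_{i=1}^{D}\frac{C_2^{\,i}}{(1-C_2)^{D+1-i}}\cdot\frac{1}{(C_2 y+1)^i},
\end{equation*}
whose coefficients I would verify by the residue formula at $y=-1$ and at $y=-1/C_2$. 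Substituting back, the $\frac{1}{1+y}$ piece gives $\int_0^\infty e^{-C_1 y}/(1+y)\,\mathrm{d}y=e^{C_1}E_1(C_1)$, while each $\frac{1}{(C_2 y+1)^i}$ piece, treated by the same change of variables as in the computation of $\mathbb{E}[\widetilde\Gamma_c]$, yields $\frac{e^{C_1/C_2}}{C_2}E_i(C_1/C_2)$. Collecting the common factor $e^{C_1/C_2}$ out front and identifying the remaining bracketed expression with $\Psi(D)$ delivers the claimed formula. The only delicate points will be tracking the signs and powers of $(1-C_2)$ through the partial fractions and recognizing that the $e^{C_1}$ produced by the first term combines with the prefactor $e^{-C_1/C_2}$ inside $\Psi$ to give the $e^{(C_1/C_2)(C_2-1)}$ that appears in the statement.
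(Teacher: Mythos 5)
Your overall strategy is sound and, in fact, more explicit than the paper itself: the paper disposes of this lemma with a one-line citation to an external result (Lemma~9 of \cite{10478577}), whereas you give a self-contained derivation. The CDF relabelling, the tail-integral formula $\mathbb{E}[\widetilde\Gamma_c]=\int_0^\infty(1-F_{\widetilde\Gamma_c}(y))\,\mathrm{d}y$ with the substitution $z=C_2y+1$, and the identity $\mathbb{E}[\log_2(1+\widetilde\Gamma_c)]=\frac{1}{\ln 2}\int_0^\infty\frac{1-F_{\widetilde\Gamma_c}(y)}{1+y}\,\mathrm{d}y$ are all correct and do reduce the problem to the integral $I$ you display. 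Your observation that $e^{C_1/C_2}\cdot e^{(C_1/C_2)(C_2-1)}=e^{C_1}$ reconciles the first term of $\Psi(D)$ with $e^{C_1}E_1(C_1)$ is also right.

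There is, however, a concrete error in the partial-fraction identity you wrote down: the numerator of the $i$-th term should be $C_2$ (first power), not $C_2^{\,i}$. The correct decomposition, obtainable by iterating $\frac{1}{(u-a)u^D}=\frac{1}{a}\bigl(\frac{1}{(u-a)u^{D-1}}-\frac{1}{u^D}\bigr)$ with $u=C_2y+1$ and $a=1-C_2$, is
\begin{equation*}
\frac{1}{(1+y)(C_2 y+1)^D}=\frac{1}{(1-C_2)^D}\cdot\frac{1}{1+y}-\sum_{i=1}^{D}\frac{C_2}{(1-C_2)^{D+1-i}}\cdot\frac{1}{(C_2 y+1)^i}.
\end{equation*}
You can check your version fails already at $D=2$, $C_2=\tfrac12$: the coefficient of $(C_2y+1)^{-2}$ must be $-1$, whereas $C_2^{\,2}/(1-C_2)=\tfrac12$ gives $-\tfrac12$. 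The distinction matters for the final formula: with the correct single factor of $C_2$, the $1/C_2$ arising from $\int_0^\infty e^{-C_1y}(C_2y+1)^{-i}\,\mathrm{d}y=\frac{e^{C_1/C_2}}{C_2}E_i(C_1/C_2)$ cancels exactly and yields the terms $\frac{E_i(C_1/C_2)}{(1-C_2)^{D+1-i}}$ appearing in $\Psi(D)$; with $C_2^{\,i}$ you would be left with a spurious factor $C_2^{\,i-1}$ in each summand and would not recover the stated expression. Since you explicitly planned to verify the coefficients by residues, this is a repairable slip rather than a structural flaw, but as written the decomposition is false for $D\geq 2$. (You should also note the implicit assumption $C_2\neq 1$ for the decomposition to be nondegenerate.)
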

\begin{proof}
The proof follows from \cite[Lemma 9]{10478577}. 
\end{proof} 

\begin{Theorem}
The lower bound \( \widehat{R}_c(t) \) for the ergodic rate of the common stream \( R_c \) is given by
\begin{equation}
\label{R_c_lower_bound}
\widehat{R}_c(t) = \frac{e^{\frac{C_1}{C_2}}}{\ln 2} \Psi \left( D \right) - \sqrt{\frac{V\left\lbrace \frac{e^{\frac{C_1}{C_2}}}{C_2} E_D\left(\frac{C_1}{C_2}\right) \right\rbrace}{l_c}} Q^{-1}(\beta_{c,k}),
\end{equation}
where the terms \( C_1 \), \( C_2 \), and \( D \) are defined in Lemma \ref{lemma4}.
\end{Theorem}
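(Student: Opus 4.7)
The plan is to start from the finite-blocklength expression for the common-stream instantaneous rate,
\begin{equation*}
R_c^{(m)} \approx C(\min_{k\in\mathcal{K}}\Gamma_{c,k}) - \sqrt{\frac{V(\min_{k\in\mathcal{K}}\Gamma_{c,k})}{l_c}}\,Q^{-1}(\beta_{c,k}),
\end{equation*}
take expectations to obtain $R_c$, and then turn each of the two resulting expectations into a closed form by (i) replacing the true SINR by the surrogate r.v.\ $\widetilde{\Gamma}_c$ constructed in Lemma \ref{lemma3}, and (ii) bounding the dispersion term via Jensen's inequality. The intended chain of replacements is $\mathbb{E}\{C(\min_k\Gamma_{c,k})\}\approx\mathbb{E}\{C(P(1-t)\min_k\zeta_k Y_{c,k})\}\approx\mathbb{E}\{C(\widetilde{\Gamma}_c)\}$, invoking Definition \ref{empirical approximation} in the first step and the CDF approximation of Lemma \ref{lemma3} in the second, and likewise for the term involving $V(\cdot)$.

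First I would dispose of the capacity term. Lemma \ref{lemma4} already delivers the closed-form identity $\mathbb{E}\{C(\widetilde{\Gamma}_c)\}\approx \frac{e^{C_1/C_2}}{\ln 2}\Psi(D)$, so after the two approximation steps this portion of $R_c$ is precisely the first term of $\widehat{R}_c(t)$. No further computation is needed here; the only justification to write out is that the substitutions preserve the direction of the bound (the $C(\cdot)$ term is a leading-order Shannon contribution and the approximations are treated as tight, as asserted in Definition \ref{empirical approximation} and verified in Fig.~\ref{errors}).

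The dispersion term is the main obstacle, and the whole reason the statement is a lower bound rather than an equality. After substituting $\widetilde{\Gamma}_c$, one must upper-bound $\mathbb{E}\bigl\{\sqrt{V(\widetilde{\Gamma}_c)/l_c}\bigr\}$ so that subtracting it yields a lower bound on $R_c$. The idea is to apply Jensen's inequality twice: since $\sqrt{\cdot}$ is concave, $\mathbb{E}\{\sqrt{V(\widetilde{\Gamma}_c)/l_c}\}\leq \sqrt{\mathbb{E}\{V(\widetilde{\Gamma}_c)\}/l_c}$; and since $V(x)=(1-(1+x)^{-2})(\log_2 e)^2$ is concave on $x\geq 0$ (the second derivative of $-(1+x)^{-2}$ is negative there), a second Jensen step gives $\mathbb{E}\{V(\widetilde{\Gamma}_c)\}\leq V(\mathbb{E}\{\widetilde{\Gamma}_c\})$. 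Composing the two yields the desired upper bound $\mathbb{E}\{\sqrt{V(\widetilde{\Gamma}_c)/l_c}\}\leq \sqrt{V(\mathbb{E}\{\widetilde{\Gamma}_c\})/l_c}$. Substituting the closed form $\mathbb{E}\{\widetilde{\Gamma}_c\}=\frac{e^{C_1/C_2}}{C_2}E_D(C_1/C_2)$ from Lemma \ref{lemma4} produces exactly the square-root term appearing in \eqref{R_c_lower_bound}.

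Finally I would assemble the two pieces: the equality (in the approximation sense) for the capacity term, minus the Jensen upper bound for the dispersion term multiplied by $Q^{-1}(\beta_{c,k})$, gives $R_c\geq \widehat{R}_c(t)$ as stated. The delicate points to flag in the write-up are (a) checking that the two Jensen applications compose in the right direction so that a \emph{lower} bound on $R_c$ is obtained after the minus sign, (b) justifying concavity of $V$ on the relevant domain $x\geq 0$, and (c) noting that Definition \ref{empirical approximation} is the only step that is not a strict inequality, so the claim is properly a lower bound on the closed-form surrogate of $R_c$ rather than on $R_c$ itself; the tightness of this surrogate is what Fig.~\ref{errors} is invoked to support.
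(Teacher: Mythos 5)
Your proposal is correct and follows essentially the same route as the paper: approximate $R_c$ via Definition \ref{empirical approximation} and Lemma \ref{lemma3} to replace the true SINR by $\widetilde{\Gamma}_c$, keep the capacity term from Lemma \ref{lemma4}, and apply Jensen's inequality to the dispersion term before substituting $\mathbb{E}\{\widetilde{\Gamma}_c\}$. The only cosmetic difference is that the paper invokes concavity of the composite $\sqrt{V(x)}$ in a single Jensen step, whereas you split it into two applications (concavity of $\sqrt{\cdot}$ and of $V$), which is equivalent and, if anything, easier to verify.
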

\begin{proof}
Using \eqref{Cminc} and \eqref{vminc} along with Lemma \ref{lemma3}, we approximate \( R_c \) as
\begin{equation}
R_c \approx \mathbb{E}\left\lbrace C(\widetilde\Gamma_c) \right\rbrace - \mathbb{E}\left\lbrace \sqrt{\frac{V(\widetilde\Gamma_c)}{l_c}} Q^{-1}(\beta_{c,k}) \right\rbrace. \label{Za}
\end{equation}
Since \( \sqrt{V(x)} \) is concave, applying Jensen’s inequality to \eqref{Za}, we obtain the following bound,
\begin{equation}
\label{R_c_lower}
R_c \ge\mathbb{E}\left\lbrace C(\widetilde\Gamma_c) \right\rbrace - \sqrt{\frac{V\left\lbrace \mathbb{E}(\widetilde\Gamma_c) \right\rbrace}{l_c}} Q^{-1}(\beta_{c,k}).
\end{equation} 
Substituting \eqref{E_C_Gamma_c} and \eqref{E_Gamma_c} into \eqref{R_c_lower}, we obtain the lower bound of \( R_c \), \( \widehat{R}_c(t) \), thus completing the proof.
\end{proof}
\vspace{-3mm}
\subsection{Lower Bound for $R_k$}
We derive the closed-form lower bound for the ergodic rate of the $k$-th private stream, i.e., $R_k$. Firstly, we rewrite $\mathbb {E}\left \lbrace{C(\Gamma_{p,k})}\right \rbrace$ as
 \begin{small}
\begin{align}
	\label{log2}
	\mathbb {E}&\left \lbrace{C(\Gamma_{p,k})}\right \rbrace=\mathbb{E}\left\lbrace \log_{2}\left( 1+ {Pt}{\zeta_{k}}\sum_{j\in\mathcal{K}}{\mu_j}|\mathbf{h}^{H}_{k}[m]\mathbf{p}_{j}|^{2}  \right) \right\rbrace\nonumber\\&-\mathbb{E}\left\lbrace\log_{2}\left(1+ {Pt}{\zeta_{k}}\sum_{j\in\mathcal{K}, j\neq k} \mu_j|\mathbf{h}^{H}_{k}[m]\mathbf{p}_{j}|^{2} \right) \right\rbrace.  
\end{align}
\end{small}To obtain a concise closed-form approximation of $\mathbb {E}\left \lbrace{C(\Gamma_{p,k})}\right \rbrace$, we propose the following Lemmas.
\begin{lemma}
	\label{Xkdejinsi}
	The r.v. $X_{k} = \sum\nolimits_{j \in {\mathcal K}} {{\mu_j}{{\left| {{\mathbf{h}}_k^H[m]{{\mathbf{p}}_j}} \right|}^2}}$ can be approximated by the r.v. $\widetilde X_k$ with the distribution $Gamma({\widetilde D_{X_k}},{\widetilde \theta _{X_k}})$, where
	\begin{align}
		\label{dxkthxk}
		{\widetilde D_{X_k}} = \frac{(({N_t-K+1}){\varepsilon^2}{\mu_{k}}+1-{\varepsilon^2})^2}{(N_t-K+1)\varepsilon^4\mu_{k}^2+{(1-\varepsilon^2)^2}{\sum\nolimits_{j \in {\mathcal K}} {{\mu_j}^2} }},\hfill\\{\widetilde\theta_{X_k}}=\frac{{(N_t-K+1)\varepsilon^4\mu_{k}^2+{(1-\varepsilon^2)^2}{\sum\nolimits_{j \in {\mathcal K}} {{\mu_j}^2} }}}{({N_t-K+1}){\varepsilon^2}{\mu_{k}}+1-{\varepsilon^2}},
	\end{align}
    and $\alpha_k\triangleq\mathbb{E}\{\ln(\widetilde{X}_k)\}=\ln(\widetilde\theta_{X_k})+\Gamma^{\prime}(\widetilde{D}_{X_k})/\Gamma(\widetilde{D}_{X_k})$.
\end{lemma}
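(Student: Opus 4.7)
The plan is to mirror the structure of the proof of Lemma \ref{Xjinsi} in Appendix A, but now keeping the per-user weights $\mu_j$ explicit instead of setting them all to $1/K$. First I would split the sum defining $X_k$ into the self term ($j=k$) and the interference terms ($j\neq k$):
\begin{equation*}
X_k = \mu_k\bigl|\mathbf{h}_k^H[m]\mathbf{p}_k\bigr|^2 + \sum_{j\in\mathcal{K},\, j\neq k}\mu_j\bigl|\mathbf{h}_k^H[m]\mathbf{p}_j\bigr|^2.
\end{equation*}
For the first piece I would substitute the approximation \eqref{Approximation1} together with the known distributions $\bigl|\mathbf{h}_k^H[m-1]\mathbf{p}_k\bigr|^2\sim Gamma(N_t-K+1,1)$ and $\bigl|\mathbf{e}_k^H[m]\mathbf{p}_k\bigr|^2\sim Gamma(1,1)$, turning the self term into a weighted sum of two Gammas with shape/scale pairs $(N_t-K+1,\,\varepsilon^2\mu_k)$ and $(1,\,(1-\varepsilon^2)\mu_k)$ (using the scaling property $cY\sim Gamma(D,c\theta)$). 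For each interference term I would use \eqref{Approximation2}, which collapses it to $(1-\varepsilon^2)\mu_j\bigl|\mathbf{e}_k^H[m]\mathbf{p}_j\bigr|^2\sim Gamma(1,(1-\varepsilon^2)\mu_j)$.

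Next I would invoke the moment-matching Gamma approximation \eqref{gammajinsi}: compute $\sum_i D_{A_i}\theta_{A_i}$ and $\sum_i D_{A_i}\theta_{A_i}^2$ over all three groups of Gamma summands. For the numerator one gets
\begin{equation*}
(N_t-K+1)\varepsilon^2\mu_k + (1-\varepsilon^2)\mu_k + (1-\varepsilon^2)\sum_{j\neq k}\mu_j = (N_t-K+1)\varepsilon^2\mu_k + (1-\varepsilon^2),
\end{equation*}
where the simplification uses the constraint $\sum_{j\in\mathcal{K}}\mu_j=1$. For the second moment sum the cross terms vanish and one obtains directly $(N_t-K+1)\varepsilon^4\mu_k^2 + (1-\varepsilon^2)^2\sum_{j\in\mathcal{K}}\mu_j^2$. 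Plugging these into \eqref{gammajinsi} produces precisely the claimed $\widetilde D_{X_k}$ and $\widetilde\theta_{X_k}$.

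Finally, for $\alpha_k=\mathbb{E}\{\ln(\widetilde X_k)\}$, since $\widetilde X_k\sim Gamma(\widetilde D_{X_k},\widetilde\theta_{X_k})$ I would invoke the standard identity that for $Y\sim Gamma(D,\theta)$ one has $\mathbb{E}\{\ln Y\}=\ln\theta + \Gamma'(D)/\Gamma(D)$ (which follows from differentiating the Mellin transform $\mathbb{E}\{Y^s\}=\theta^s\Gamma(D+s)/\Gamma(D)$ at $s=0$). This immediately yields the stated expression for $\alpha_k$.

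I do not foresee a genuine obstacle: the derivation is essentially bookkeeping built on Lemma \ref{Xjinsi}'s template and on \eqref{gammajinsi}. The only subtlety is remembering to apply the normalization $\sum_j \mu_j=1$ in the numerator while leaving $\sum_j \mu_j^2$ untouched in the denominator; absent that step the parameters would not simplify to the compact form in \eqref{dxkthxk}.
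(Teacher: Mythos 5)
Your proposal is correct and follows essentially the same route the paper intends: the paper explicitly states that Lemma \ref{Xkdejinsi} is proved "similar to Lemma \ref{Xjinsi}" (Appendix A), i.e., substitute \eqref{Approximation1}--\eqref{Approximation2}, identify the scaled Gamma summands, and apply the moment-matching formula \eqref{gammajinsi}, exactly as you do. Your bookkeeping (using $\sum_j\mu_j=1$ in the first-moment sum while keeping $\sum_j\mu_j^2$ in the second) and the digamma identity for $\mathbb{E}\{\ln\widetilde X_k\}$ both check out.
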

\begin{lemma}
	\label{Ykdejinsi}
	The r.v. $ Y_k= \sum\nolimits_{j \in {\mathcal K}\backslash k} {{\mu_j}{{\left| {{\mathbf{h}}_k^H[m]{{\mathbf{p}}_j}} \right|}^2}} $ can be approximated by the r.v. $\widetilde Y_k$ with the distribution $Gamma({\widetilde D_{Y_k}},{\widetilde \theta _{Y_k}})$, where
\begin{align}
\label{d4o4}
{\widetilde D_{Y_k}} = \frac{{{{(1 - {\mu_k})}^2}}}{{\sum\nolimits_{j \in {\mathcal K}\backslash k} {{\mu_j}^2} }},\ {\widetilde \theta _{Y_k}} = \frac{{(1-\varepsilon^2)}{\sum\nolimits_{j \in {\mathcal K}\backslash k} {{\mu_j}^2} }}{{1 - {\mu_k}}},
\end{align}
    and $\eta_k\triangleq\mathbb{E}(\widetilde{Y}_k)=(1-\varepsilon^2)(1-\mu_{k})$. 
\begin{proof}The proof of Lemmas \ref{Xkdejinsi} and \ref{Ykdejinsi} is similar to Lemma \ref{Xjinsi}, which is omitted here due to page limitations.\end{proof}
\end{lemma}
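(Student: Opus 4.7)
The plan is to apply the same moment-matching framework already used to prove Lemma~\ref{Xjinsi}, but in the simpler regime where the $j=k$ term is absent, so that the $\varepsilon^2|\mathbf{h}_k^H[m-1]\mathbf{p}_k|^2$ contribution never appears. Concretely, for every $j\in\mathcal{K}\setminus\{k\}$, approximation~\eqref{Approximation2} yields $|\mathbf{h}_k^H[m]\mathbf{p}_j|^2 \approx (1-\varepsilon^2)|\mathbf{e}_k^H[m]\mathbf{p}_j|^2$, because the ZF precoder $\mathbf{p}_j$ is built from $\{\mathbf{h}_i[m-1]\}_{i\in\mathcal{K}}$ and therefore satisfies $\mathbf{h}_k^H[m-1]\mathbf{p}_j=0$. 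Since $\mathbf{p}_j$ is isotropic on the unit sphere of $\mathbb{C}^{N_t}$ and independent of the CSI estimation error $\mathbf{e}_k[m]\sim\mathcal{CN}(0,\mathbf{I})$, the scalar $|\mathbf{e}_k^H[m]\mathbf{p}_j|^2$ is exponential with unit mean, i.e., $\mathrm{Gamma}(1,1)$. Consequently, the summand $A_j\triangleq \mu_j|\mathbf{h}_k^H[m]\mathbf{p}_j|^2 \approx \mu_j(1-\varepsilon^2)G_j$ with $G_j\sim\mathrm{Gamma}(1,1)$ is approximately $\mathrm{Gamma}(D_j,\theta_j)$ with shape $D_j=1$ and scale $\theta_j=\mu_j(1-\varepsilon^2)$.

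Next, I would invoke the Gamma moment-matching formula~\eqref{gammajinsi} on $Y_k=\sum_{j\neq k}A_j$. This reduces to computing $\sum_{j\neq k}D_j\theta_j=(1-\varepsilon^2)\sum_{j\neq k}\mu_j=(1-\varepsilon^2)(1-\mu_k)$, using the constraint $\sum_{j\in\mathcal{K}}\mu_j=1$, together with $\sum_{j\neq k}D_j\theta_j^2=(1-\varepsilon^2)^2\sum_{j\neq k}\mu_j^2$. Substituting these two sums into~\eqref{gammajinsi} immediately gives
\[
\widetilde{D}_{Y_k}=\frac{(1-\mu_k)^2}{\sum_{j\neq k}\mu_j^2},\qquad \widetilde{\theta}_{Y_k}=\frac{(1-\varepsilon^2)\sum_{j\neq k}\mu_j^2}{1-\mu_k},
\]
which match the stated expressions. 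The claim $\eta_k=(1-\varepsilon^2)(1-\mu_k)$ then follows instantly from $\mathbb{E}\{\widetilde{Y}_k\}=\widetilde{D}_{Y_k}\widetilde{\theta}_{Y_k}$, or equivalently from linearity of expectation applied to each $A_j$.

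The main technical subtlety, such as it is, lies in justifying the independence claim rigorously: specifically, that each ZF precoder $\mathbf{p}_j$, although deterministically coupled to the past channels $\{\mathbf{h}_i[m-1]\}$, is independent of the current-time error vector $\mathbf{e}_k[m]$, so that $\mathbf{e}_k^H[m]\mathbf{p}_j$ remains a properly isotropic complex Gaussian conditional on $\mathbf{p}_j$. Once that independence is granted, the cross terms produced by expanding $|\mathbf{h}_k^H[m]\mathbf{p}_j|^2=|\sqrt{\varepsilon^2}\,\mathbf{h}_k^H[m-1]\mathbf{p}_j+\sqrt{1-\varepsilon^2}\,\mathbf{e}_k^H[m]\mathbf{p}_j|^2$ average out and can be dropped, which is precisely the approximation already endorsed in~\eqref{Approximation2}. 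The remaining steps are pure algebraic bookkeeping, so I do not foresee any real obstacle beyond verifying this independence and tracking the arithmetic in~\eqref{gammajinsi}.
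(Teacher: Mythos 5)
Your proposal is correct and is exactly the argument the paper intends: the paper omits the proof but states it is analogous to Lemma~\ref{Xjinsi}, i.e., apply approximation~\eqref{Approximation2} so each summand $\mu_j|\mathbf{h}_k^H[m]\mathbf{p}_j|^2$ is approximately $Gamma(1,\mu_j(1-\varepsilon^2))$, then feed $\sum_{j\neq k}D_j\theta_j=(1-\varepsilon^2)(1-\mu_k)$ and $\sum_{j\neq k}D_j\theta_j^2=(1-\varepsilon^2)^2\sum_{j\neq k}\mu_j^2$ into the moment-matching formula~\eqref{gammajinsi}. Your arithmetic and the resulting $\widetilde{D}_{Y_k}$, $\widetilde{\theta}_{Y_k}$, and $\eta_k=\widetilde{D}_{Y_k}\widetilde{\theta}_{Y_k}$ all check out.
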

We similarly give the following definition before obtaining a closed-form lower bound for $R_k$.
\begin{definition}
	\label{lemma5}
	Define a new r.v. $Y_{p,k}$ as follow
	\begin{align}
		\label{Cp_jinsi}
		{Y_{p,k}} = \frac{{{\zeta_{k}}{P}{t}{\mu_{k}}{{\left| {{\mathbf{h}}_k^H[m]{{\mathbf{p}}_k}} \right|}^2}}}{{\frac{{P}{t}\zeta_{k}{(1-\mu_{k})}}{(K-1)}\sum_{j \in {\mathcal K},j \ne k} {{{\left| {{\mathbf{h}}_k^H[m]{{\mathbf{p}}_j}} \right|}^2} + 1} }},
	\end{align} Note that $Y_{p,k}$ is equivalent to \( \Gamma_{p,k} \) in \eqref{SINR_p} with \( \mu_j =\frac{1-\mu_{k}}{K-1}  \) for $\forall j \in \mathcal{K} \backslash k$, and then we make the following approximation
	\begin{align}
		\label{vminp}
		\begin{gathered}
			\mathbb {E}\left \lbrace{\sqrt {\frac{{{V}(\Gamma_{p,k})}}{{{l_k}}}} {{Q} ^{ - 1}}({\beta _{p,k}})}\right \rbrace 
			\approx\mathbb {E}\left \lbrace{\sqrt {\frac{{{V} ({Y_{p,k}})}}{{{l_k}}}} {{Q} ^{ - 1}}({\beta _{p,k}})}\right \rbrace.    \hfill \\ 
		\end{gathered}
	\end{align}
This approximation is based on extensive experimental data, which will also be validated in the simulation results shown in Fig.~\ref{errors}.
 	
\end{definition}
Different from Definition \ref{empirical approximation}, the approximation similar to \eqref{Cminc} is no longer reliable. We will provide this lower approximation in \eqref{R_k_bound}. And then we give the approximated mean of ${Y_{p,k}}$. 

\begin{lemma}
\label{Y_pkdejinsi}
The expectation of \( Y_{p,k} \) can be approximated by the expectation of \( \widetilde{\Gamma}_{p,k} \), i.e., 
\(\mathbb{E}[Y_{p,k}] \approx \mathbb{E}[\widetilde{\Gamma}_{p,k}]\). The expectation \(\mathbb{E}[\widetilde{\Gamma}_{p,k}]\) is given by
\begin{align}
\label{mean_Gamma_p_k}
\mathbb{E}\left[ {{{\widetilde \Gamma }_{p,k}}} \right] = \frac{{{e^{\frac{{{C_{k_1}}}}{{{C_{k_2}}}}}}}}{{{C_{k_2}}}}{E_{{{[K-1]}}}}(\frac{{{C_{k_1}}}}{{{C_{k_2}}}}),
\end{align}where $C_{k_1} = \frac{1}{{P t \zeta_k \mu_k \widetilde{D}_{M_k} \widetilde{\theta}_{M_k}}}$, $C_{k_2} = \frac{{\left( 1 - \mu_k \right)\left( 1 - \varepsilon^2 \right)}}{{\left( K - 1 \right)\mu_k \widetilde{D}_{M_k} \widetilde{\theta}_{M_k}}}$, and $\widetilde{D}_{M_k} = \frac{{\left[ \left( N_t - K + 1 \right)\varepsilon^2 + \left( 1 - \varepsilon^2 \right) \right]^2}}{{\left( N_t - K + 1 \right)\varepsilon^4 + \left( 1 - \varepsilon^2 \right)^2}}$, $\widetilde{\theta}_{M_k} = \frac{{\left( N_t - K + 1 \right)\varepsilon^4 + \left( 1 - \varepsilon^2 \right)^2}}{{\left( N_t - K + 1 \right)\varepsilon^2 + \left( 1 - \varepsilon^2 \right)}}$.

\end{lemma}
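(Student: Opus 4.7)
The plan is to follow the template established for the common stream in Lemmas 2--4: approximate both the numerator and the denominator of $Y_{p,k}$ by Gamma random variables via the moment-matching rule \eqref{gammajinsi}, derive the CDF of the resulting ratio $\widetilde{\Gamma}_{p,k}$, and then integrate its survival function against the generalized exponential-integral identity $E_v(x)=\int_1^\infty e^{-tx}t^{-v}\,\mathrm{d}t$ to produce \eqref{mean_Gamma_p_k}.

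First I would handle the numerator $|\mathbf{h}_k^H[m]\mathbf{p}_k|^2$. Using \eqref{Approximation1} it decomposes into two independent components, namely $\varepsilon^2\cdot\mathrm{Gamma}(N_t-K+1,1)$ from the ZF-aligned term and $(1-\varepsilon^2)\cdot\mathrm{Gamma}(1,1)$ from the estimation-error term. Substituting $D_{A_1}=N_t-K+1,\,\theta_{A_1}=\varepsilon^2$ and $D_{A_2}=1,\,\theta_{A_2}=1-\varepsilon^2$ into \eqref{gammajinsi} reproduces exactly the stated $\widetilde{D}_{M_k}$ and $\widetilde{\theta}_{M_k}$, whose product $\widetilde{D}_{M_k}\widetilde{\theta}_{M_k}=(N_t-K+1)\varepsilon^2+(1-\varepsilon^2)$ is the mean that enters $C_{k_1}$. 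For the interference sum in the denominator of \eqref{Cp_jinsi}, approximation \eqref{Approximation2} reduces each summand for $j\ne k$ to $(1-\varepsilon^2)\cdot\mathrm{Gamma}(1,1)$, so the sum of $K-1$ such independent summands is exactly $\mathrm{Gamma}(K-1,\,1-\varepsilon^2)$. No further moment-matching is required here, which is precisely why the order of the exponential integral in \eqref{mean_Gamma_p_k} appears as the integer $K-1$ rather than as a rounded non-integer as in Lemma \ref{lemma4}.

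With these two Gamma factors in place, I would derive the CDF of $\widetilde{\Gamma}_{p,k}$ by conditioning on the denominator Gamma $V$, integrating out the numerator via its survival function, and marginalizing against $V$'s density. Writing $a=Pt\zeta_k\mu_k$ and $b=Pt\zeta_k(1-\mu_k)/(K-1)$, a routine bookkeeping check gives $a\theta_U=1/C_{k_1}$ and $b\theta_V/(a\theta_U)=C_{k_2}$, so the conditional-then-marginal integration yields the clean form $F_{\widetilde{\Gamma}_{p,k}}(y)=1-e^{-C_{k_1}y}/(1+C_{k_2}y)^{K-1}$. Then $\mathbb{E}[\widetilde{\Gamma}_{p,k}]=\int_0^\infty[1-F_{\widetilde{\Gamma}_{p,k}}(y)]\,\mathrm{d}y$, followed by the substitution $z=1+C_{k_2}y$ and matching against $E_{K-1}(C_{k_1}/C_{k_2})$, delivers \eqref{mean_Gamma_p_k}.

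The main obstacle is the step in which the Gamma-distributed numerator of non-integer shape $\widetilde{D}_{M_k}$ is effectively replaced by an exponential of matching mean $\widetilde{D}_{M_k}\widetilde{\theta}_{M_k}$; without this reduction the conditional CDF would involve an incomplete gamma function rather than the clean survival exponential required for the $E_{K-1}$ identity. This additional replacement has no analogue in Lemma \ref{lemma3}, where the common-stream numerator is already $\mathrm{Gamma}(1,1)$, so its accuracy must be argued separately and is ultimately validated numerically alongside \eqref{Cminc}--\eqref{vminc} in Fig.~\ref{errors}. A secondary subtlety is that Definition \ref{lemma5} imposes the symmetric allocation $\mu_j=(1-\mu_k)/(K-1)$ for $j\ne k$, which is what makes the denominator a genuinely i.i.d.\ sum; without that symmetry one would need the weighted moment-matching of Lemma \ref{Ykdejinsi}, which would destroy the clean $\mathrm{Gamma}(K-1,\cdot)$ structure and with it the integer index on $E_{K-1}$.
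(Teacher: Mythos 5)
Your derivation is correct and follows essentially the same route as the paper's Appendix C: Gamma moment-matching of the numerator $M_k$ via \eqref{gammajinsi} to get $\widetilde{D}_{M_k},\widetilde{\theta}_{M_k}$, the exact $Gamma(K-1,\tfrac{(1-\mu_k)(1-\varepsilon^2)}{K-1})$ law for the symmetrized interference term, the Lemma~\ref{lemma3}-style CDF $1-e^{-C_{k_1}y}/(1+C_{k_2}y)^{K-1}$, and the $E_{K-1}$ integral; your constants $C_{k_1},C_{k_2}$ check out. The one place you diverge is the step you flag as the "main obstacle": replacing the shape-$\widetilde{D}_{M_k}$ numerator by an exponential $\widehat{M}_k\sim Gamma(1,\widetilde{D}_{M_k}\widetilde{\theta}_{M_k})$ is not an extra empirical approximation requiring separate numerical validation. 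Since the lemma only concerns the mean, and the numerator and denominator are already assumed independent, $\mathbb{E}[Y_{p,k}]$ factors as $\mathbb{E}[Pt\zeta_k\mu_k \widetilde{M}_k]\,\mathbb{E}[(Pt\zeta_k\widetilde{N}_k+1)^{-1}]$, and the swap $\widetilde{M}_k\to\widehat{M}_k$ preserves the first factor exactly; this is precisely how the paper justifies it, so adopting that observation closes the only loose end in your argument.
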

\begin{proof}
			See Appendix C
\end{proof}

\begin{Theorem}
The lower bound \( \widehat{R}_k \) for the ergodic rate of the $k$-th private stream \( R_k \) is given by
\begin{align}
\label{R_k_bound}
\nonumber \widehat{R}_k(t,\boldsymbol{\mu})&= \log_{2}\left( 1+ {Pt}{\zeta_{k}}e^{\alpha_k} \right)-\log_{2}\left( 1+ {Pt}{\zeta_{k}}{\eta_k} \right)\\&-{\sqrt {\frac{{{V}\left \lbrace\frac{{{e^{\frac{C_{k_1}}{C_{k_2}}}}}}{{{C_{k_2}}}}{E_{{{[K-1]}}}}(\frac{{{C_{k_1}}}}{{{C_{k_2}}}}).\right \rbrace}}{{{l_k}}}} {{Q} ^{-1}}({\beta _{p,k}})}.	
\end{align}
where the terms $\alpha_k$, $\eta_k$, \( C_{k_1} \) and \( C_{k_2} \) are defined in Lemmas \ref{Xkdejinsi}, \ref{Ykdejinsi} and \ref{Y_pkdejinsi}.
\end{Theorem}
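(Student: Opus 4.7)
The plan is to mirror the structure of the Theorem 1 proof: rewrite $R_k=\mathbb{E}\{C(\Gamma_{p,k})\}-\mathbb{E}\{\sqrt{V(\Gamma_{p,k})/l_k}\}Q^{-1}(\beta_{p,k})$, then treat the capacity term and the dispersion penalty separately, replacing the intractable r.v.s by the Gamma surrogates from Lemmas \ref{Xkdejinsi}--\ref{Y_pkdejinsi} and closing each term with a suitably directed Jensen inequality. For the capacity term, I would start from the identity \eqref{log2}, which writes $\mathbb{E}\{C(\Gamma_{p,k})\}$ as the difference $\mathbb{E}\{\log_{2}(1+Pt\zeta_{k}X_k)\}-\mathbb{E}\{\log_{2}(1+Pt\zeta_{k}Y_k)\}$, with $X_k$ and $Y_k$ as defined there. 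Since I need a lower bound on the whole, I need a lower bound on the first log and an upper bound on the second.

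For the first log, I would use that $u\mapsto\log_{2}(1+Pt\zeta_{k}e^{u})$ is convex, so Jensen gives $\mathbb{E}\{\log_{2}(1+Pt\zeta_{k}X_k)\}\geq \log_{2}(1+Pt\zeta_{k}e^{\mathbb{E}\{\ln X_k\}})$; approximating $X_k$ by the Gamma r.v.\ $\widetilde{X}_k$ of Lemma \ref{Xkdejinsi} and using the standard identity $\mathbb{E}\{\ln \widetilde{X}_k\}=\ln\widetilde{\theta}_{X_k}+\Gamma'(\widetilde{D}_{X_k})/\Gamma(\widetilde{D}_{X_k})=\alpha_k$ produces the $\log_{2}(1+Pt\zeta_{k}e^{\alpha_k})$ term. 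For the second log, concavity of $y\mapsto\log_{2}(1+y)$ and Jensen give $\mathbb{E}\{\log_{2}(1+Pt\zeta_{k}Y_k)\}\leq \log_{2}(1+Pt\zeta_{k}\mathbb{E}\{Y_k\})$; substituting $Y_k$ by $\widetilde{Y}_k$ and invoking $\eta_k=\mathbb{E}\{\widetilde{Y}_k\}=\widetilde{D}_{Y_k}\widetilde{\theta}_{Y_k}=(1-\varepsilon^{2})(1-\mu_k)$ from Lemma \ref{Ykdejinsi} delivers the $-\log_{2}(1+Pt\zeta_{k}\eta_k)$ term. For the dispersion penalty, I would first apply the empirical approximation \eqref{vminp} of Definition \ref{lemma5} to replace $V(\Gamma_{p,k})$ with $V(Y_{p,k})$, then apply Jensen's inequality in the same way as in \eqref{R_c_lower} — since $x\mapsto\sqrt{V(x)}$ is concave for $x\geq 0$ — to move the expectation inside the square root, yielding $\mathbb{E}\{\sqrt{V(Y_{p,k})/l_k}\}\leq \sqrt{V(\mathbb{E}\{Y_{p,k}\})/l_k}$, and finally invoke Lemma \ref{Y_pkdejinsi} to substitute the closed form $\mathbb{E}\{Y_{p,k}\}\approx e^{C_{k_1}/C_{k_2}}E_{[K-1]}(C_{k_1}/C_{k_2})/C_{k_2}$. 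Collecting the three pieces and multiplying the penalty by $Q^{-1}(\beta_{p,k})$ reproduces exactly \eqref{R_k_bound}.

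The main obstacle I anticipate is bookkeeping the direction of each inequality so that the three bounds align into a genuine lower bound on $R_k$: the convexity-based Jensen on the first log must go one way, the concavity-based Jensen on the second log and on $\sqrt{V(\cdot)}$ must go the other way, and both signs must match the minus signs in \eqref{log2} and \eqref{R_pk}. A related subtlety is that the Gamma surrogates $\widetilde{X}_k,\widetilde{Y}_k$ and the replacement of $\Gamma_{p,k}$ by $Y_{p,k}$ are moment-matching approximations rather than stochastic dominance results, so strictly speaking the final expression is a bound only within the accuracy of these approximations; this is the same caveat that underlies Theorem 1 and is justified empirically in Fig.~\ref{errors}, so I would defer to that validation rather than attempt a formal dominance argument.
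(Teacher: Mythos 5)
Your proposal is correct and follows essentially the same route as the paper's proof: the decomposition via \eqref{log2}, the Gamma surrogates of Lemmas \ref{Xkdejinsi} and \ref{Ykdejinsi}, the three appropriately directed Jensen inequalities (convexity of $\log_2(1+ae^u)$, concavity of $\log_2(1+ay)$ and of $\sqrt{V(\cdot)}$), the replacement \eqref{vminp}, and the closed-form moments from Lemma \ref{Y_pkdejinsi}. The only cosmetic difference is the order of operations (you apply Jensen term by term before substituting the surrogates, whereas the paper substitutes first and then applies Jensen to all three terms at once), and your closing caveat about moment-matching versus stochastic dominance matches the paper's reliance on the empirical validation in Fig.~\ref{errors}.
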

\begin{proof}
We first apply the linearity of expectation to rewrite \( R_k \) as
\begin{align}
\nonumber R_k &= \mathbb{E}\left\lbrace \log_2\left( 1+ {Pt}{\zeta_{k}} X_k \right) \right\rbrace 
- \mathbb{E}\left\lbrace \log_2\left( 1+ {Pt}{\zeta_{k}} Y_k \right) \right\rbrace \\
& \quad - \mathbb{E}\left\lbrace \sqrt{\frac{V(\Gamma_{p,k})}{l_k}} Q^{-1}(\beta_{p,k}) \right\rbrace.
\end{align}
 Using Lemmas \ref{Xkdejinsi}, \ref{Ykdejinsi}, and \eqref{vminp}, we approximate $R_k$ as
\begin{align}
\nonumber R_k &\approx \mathbb{E}\left\lbrace \log_2\left( 1+ {Pt}{\zeta_{k}} \widetilde{X}_k \right) \right\rbrace 
- \mathbb{E}\left\lbrace \log_2\left( 1+ {Pt}{\zeta_{k}} \widetilde{Y}_k \right) \right\rbrace \\
&\label{R_k_jeason} \quad - \mathbb{E}\left\lbrace \sqrt{\frac{V(Y_{p,k})}{l_k}} Q^{-1}(\beta_{p,k}) \right\rbrace.
\end{align}
Then, applying Jensen's inequality for the convex functions \( \log_2(1 + ae^x) \), \( -\log_2(1 + ax) \), and \( -a\sqrt{V(x)} \) in \eqref{R_k_jeason}, we derive the following lower bound
\begin{align}
\nonumber
R_k &\ge \log_2\left( 1 + {Pt}{\zeta_{k}} e^{\mathbb{E}\left\lbrace \ln(\widetilde{X}_k) \right\rbrace} \right) 
- \log_2\left( 1 + {Pt}{\zeta_{k}} \mathbb{E}\left\lbrace \widetilde{Y}_k \right\rbrace \right) \\
& \label{E_Y_pk_approx_E_Gamma_pk} \quad - \sqrt{\frac{V\left\lbrace \mathbb{E}(Y_{p,k}) \right\rbrace}{l_k}} Q^{-1}(\beta_{p,k}).
\end{align}
Based on Lemma \ref{Y_pkdejinsi}, we can make \(\mathbb{E}[Y_{p,k}] \approx \mathbb{E}[\widetilde{\Gamma}_{p,k}]\) in \eqref{E_Y_pk_approx_E_Gamma_pk}. Finally, by applying Lemmas \ref{Xkdejinsi}, \ref{Ykdejinsi}, and \ref{Y_pkdejinsi} to perform variable substitution for \( \mathbb{E}\left\lbrace \ln(\widetilde{X}_k) \right\rbrace \), \( \mathbb{E}\left\lbrace \widetilde{Y}_k \right\rbrace \), and \( \mathbb{E}[\widetilde{\Gamma}_{p,k}] \), we obtain the final lower bound \( \widehat{R}_k \).
\end{proof}

\vspace{-2mm}
\section{Problem Formulation and Solution}
\label{problem}
This section strives to enhance the ergodic performance in RSMA FBL transmissions with the obtained bounds. To maximize the ergodic sum-rate, we
jointly optimize the global power coefficient, private power
coefficient and common rate splitting. Explicitly, the problem is formulated as follows
\begin{subequations}\label{eqn:problem_p0}
\begin{align}
	\text{\(\mathcal{P}_0\)}: \max_{ t, \boldsymbol{\mu}, \mathbf{c}}\   &  \sum_{k=1}^{K}(C_k+{R}_{k})  \label{p0_obj} \\
	s.t. \quad & 0\leq t \leq 1,  \label{p0_t_0_1} \\
	& \sum_{k=1}^{K}\mu_k=1,  \label{p0_muk_sum} \\
	& 0\leq \mu_{k} \leq 1, \forall k \in {\mathcal K},  \label{p0_muk_0_1} \\
	& C_k+R_k \geq R_{\min}, \forall k \in {\mathcal K}, \label{p0_Ck_min} \\
	& \sum_{k\in\mathcal{K}}C_{k} \leq R_{c}, \forall k \in {\mathcal K}, \label{p0_Ck_sum} \\
	& C_k \geq 0, \forall k \in {\mathcal K}  \label{p0_Ck_0}
\end{align}
\end{subequations}
where $C_{k}$ represents the portion of the common stream rate allocated to user-$k$, and $R_{\min}$ indicates the QoS constraints for each vehicle. The optimization variable $\mathbf{c}=[C_{1}, C_{2}, \ldots, C_{K}]$ determines the rate allocation among users for the common stream. The variable $t$, global power coefficient, represents the power ratio between the common and the total private streams, where $1-t$ specifies the power ratio for the common stream. Besides, $\mathbf{\boldsymbol\mu} = [\mu_1, \mu_2, \ldots, \mu_K]$ specifies the power ratio distribution among private streams, hence $t \cdot \mu_k$ designates the power ratio for the $k$-th private stream. Constraint \eqref{p0_t_0_1} concerns power allocation for the common stream. Constraints \eqref{p0_muk_sum} and \eqref{p0_muk_0_1} correspond to the power allocation among private streams. Constraint \eqref{p0_Ck_min} specifies the minimum transmission rate constraints for receivers. Constraint \eqref{p0_Ck_sum} ensures that all receivers reliably decode the common stream, and \eqref{p0_Ck_0} indicates the splitted common rate is non-negative.

It is evident that the interdependence of variables and their ratios within the objective function makes problem ($\mathcal{P}_0$) inherently non-convex. Solving this problem without a global search presents substantial challenges, as there is no closed-form expression for \eqref{p0_obj}. Moreover, applying exhaustive search methods directly is associated with prohibitive computational complexity. To find a more efficient solution with low complexity, we use the derived bounds $\widehat{R}_c$ in \eqref{R_c_lower_bound} and $\widehat{R}_k$ in \eqref{R_k_bound} to approximate the ergodic sum-rate performance. 

Specifically, we first optimize the global power coefficient for equal power-allocated private streams. Then, we decide whether to enable the common stream based on previous results. If the common stream is enabled, we will first ignore the common rate splitting, optimize private power allocation, and then optimize the common rate splitting based on all allocation ratios. If the common stream is not enabled, we will introduce QoS constraints \eqref{p0_Ck_min} and proceed with problem-solving. This entire optimization process is referred to as the single-step optimization algorithm. Finally, we will briefly discuss complexity.
\vspace{-3mm}
\subsection{Global Power Allocation}
We temporarily disregard the common rate splitting and keep the private power distribution constant. Therefore, we omit the private allocation constraints \eqref{p0_muk_sum}, \eqref{p0_muk_0_1}, and rate allocation constraints \eqref{p0_Ck_min}, \eqref{p0_Ck_sum}, \eqref{p0_Ck_0}. By employing the method of contradiction, it becomes evident that when problem $\mathcal{P}_0$ is maximized, constraint $\eqref{p0_Ck_sum}$ holds at equality; otherwise, it is clear that a higher ${C_k}$ can be chosen, thereby achieving a higher sum-rate. 
Then, we substitute equation $\eqref{p0_Ck_sum}$, where equality holds, into the objective function and use the closed-form expression. This leads to the following global power allocation subproblem
\begin{subequations}
\begin{align}
\label{p1_obj}
{\mathcal{P}_1}:~\max_{t}\   &   \widehat{R}_c(t)+\sum_{k=1}^{K}\widehat {R}_{k}(t,\boldsymbol{\mu})\hfil\\ s.t.\ \label{p1_t_0_1}&0\leq t \leq 1,	
\end{align}
\end{subequations}
where $\widehat{R}_c(t)$ is shown in \eqref{R_c_lower_bound} and $\widehat{R}_k(t,\boldsymbol{\mu})$ is calculated by \eqref{R_k_bound} with
$\boldsymbol{\mu}=[1/K,1/K...,1/K]\in {\mathbb{C}^{{K}\times 1}}$ fixed. Even though there is only one variable and one linear constraint, the objective function is non-elementary. Using the exhaustive method, the calculation can take an extremely long time. 

Since the objective function in \eqref{p1_obj} has continuous first-order derivatives, we use a gradient-based iterative algorithm that employs sequential quadratic programming (SQP) methods for rapid convergence\footnote{Although SQP methods for non-convex problems can be sensitive to initialization and may converge to local optima, our simulation results show that the proposed algorithm generally exhibits satisfactory robustness and feasibility.}. At each major iteration, a quasi-Newton update is used to approximate the Hessian of the Lagrangian function. This approximation is then used to generate a quadratic programming (QP) subproblem, whose solution provides the search direction for the line search procedure. The step length is determined by an appropriate line search, and the entire procedure iterates until convergence \cite{articleconstrained}. The initial value of \( t \) is set to 0.5. This choice ensures a balanced power split between common and private streams, with equal power initially allocated to each user. According to the optimized \( t^* \), we perform subsequent subproblems based on whether the common stream is activated.


\vspace{-3mm} \subsection{Private Power Allocation and Rate-splitting}
\subsubsection{$t^*\ll1$}
As $t$ specifies the power ratio for total private streams, in the case of $t^*\ll1$, most of the power will be allocated to the common stream. In such conditions, we can assume that the common stream has a sufficiently large rate available for satisfying the QoS requirement in \eqref{p0_Ck_min}. Hence, we disregard common rate splitting to optimize the private power allocation $\boldsymbol\mu$. Observing \eqref{Cc_jinsi} and \eqref{Cminc}, the common stream rate does not undergo significant changes due to variations in the private power allocation. Therefore, we ignore the common stream and consider the following problem\begin{subequations}
\begin{align}
\label{p2_obj}
{\mathcal{P}_2}: ~\max_{\boldsymbol{\mu}}\ &   \sum_{k=1}^{K}\widehat{R}_{k}(t^*,\boldsymbol{\mu})\hfil\\ s.t. &\sum_{k=1}^{K}\mu_k=1,\hfil\\&0\leq \mu_{k} \leq 1,\forall k \in {\mathcal K}
\end{align}
\end{subequations}
The objective function of ${\mathcal{P}_2}$ is highly coupled non-convex. Achieving the globally optimal solution also requires a very high level of complexity. Therefore, we still use SQP to obtain a suboptimal solution. The initial value of \(\boldsymbol{\mu}\) is set to \(\mu_k = {\zeta_{k}}/{(\sum_{i=1}^{n} \zeta_i)}, \forall k \in \mathcal{K}\). Setting the initial point in this manner enables faster convergence and helps to avoid premature convergence, compared to starting from the average private stream allocation.

\begin{remark}
In fact, allocating lower power for the private stream to users with poorer channels can enhance the overall rate of the common stream. This occurs because weaker users, when decoding the common stream, no longer need to treat their high-power private stream signals as interference.
\end{remark}

After solving $\mathcal{P}_1$ and $\mathcal{P}_2$, the power allocation for the initial problem $\mathcal{P}_0$ has been determined. Once the power allocation is specified, the actual $R_c$ and $R_k$ can be computed using Monte Carlo simulations. Next, we only need to configure the common rate splitting to satisfy the QoS requirements in \eqref{p0_Ck_min}. Since all users are assumed to have the same minimum rate requirement, the common stream data can be allocated to the user with the lowest rate to meet these requirements. This results in a rate allocation problem, which can be formulated as maximizing the rate of the user with the minimum rate. The optimization problem in $\mathcal{P}_0$ can then be reformulated as
\begin{subequations}
\begin{align}
\label{p3_obj}
	\mathcal{P}_3:~&\max_{\boldsymbol{c}}\min_{\forall k \in {\mathit{K}}}   (C_k + {R}_{k}) \\
	s.t. &\sum_{k\in\mathcal{K}}C_{k} = {R}_{c}, \forall k \in {\mathcal K} \\
	&C_k \geq 0, \forall k \in {\mathcal K}
\end{align}
\end{subequations}
Given that all powers are already specified, the corresponding achievable ergodic rates for each stream are fixed. Although closed-form expressions could be used to reduce the complexity, using the lower bound to allocate the rate may introduce errors. To avoid the effect of these errors, we estimate $R_c$ and $R_k$ using sample average approximation (SAA)
\begin{align}  
	\label{R_ck_meng}
	{R}_{c} &=\lim _{M \rightarrow \infty } \frac {1}{M} \sum _{m=1}^{M} R_{c, k}^{(m)}, \\ \label{R_pk_meng} {R}_{k} &=\lim _{M \rightarrow \infty } \frac {1}{M} \sum _{m=1}^{M} R_{p, k}^{(m)}. 
\end{align}
where $M$ is the number of Monte Carlo trials. As the number of sampling points approaches infinity, the SAA converges to the expected value  \cite{7555358}. After substituting a larger \( M \) to estimate \( R_c \) and \( R_k \), this problem transforms into a water-filling problem. The final level is calculated by $R_{level}=({R}_c+\sum_{i=1}^{j}{R}_k^{(i)})/j$, where $j$ starts from $K$ and decreases to 1. If ${R}_k^{(j)} \leq R_{level}$, the loop exits. Here, ${R}_k^{(j)}$ represents the $j$-th smallest private rate among ${R}_{k}$. Finally, the result is:
\begin{align}
	{C}_k= 
	\begin{cases}
		R_{level}-{R}_k & \text{if } 0\leq R_{level}-{R}_k \\
		0 & \text{if }  R_{level}-{R}_k < 0
	\end{cases}
\end{align}

\vspace{-2mm}
\subsubsection{$t^*\to1$} 
In practical systems, this leads to closing common streams, causing the RSMA scheme to degrade into an SDMA one. Due to the closure of common streams, rate splitting no longer holds significance, i.e., $C_k=0, \forall k \in {\mathcal K} $. If the SDMA scheme does not consider QoS constraints, it tends to allocate all power to users with the best channel conditions, thereby failing to meet min-rate constraint \eqref{p0_Ck_min} for all vehicles. Next, we consider the following problem
\begin{subequations}
\begin{align}
	\label{p4_obj}
{\mathcal{P}_4}:~\max_{\boldsymbol{\mu}}\   &   \sum_{k=1}^{K}{\widehat{R}}_{k}(t^*,\boldsymbol{\mu})\hfil\\ s.t \ &\sum_{k=1}^{K}\mu_k=1,\hfil\\&0\leq \mu_{k} \leq 1,\forall k \in {\mathcal K}\hfil\\
	& \widehat R_k \geq R_{min}, \forall k \in {\mathcal K} 
\end{align}
\end{subequations}
This is similar to $\mathcal{P}_2$, with the additional introduction of the minimum rate constraint. Since the closed-form solution is non-convex and involves non-elementary functions, we do not have widely applicable solving methods. Here, we continue to use the SQP method to solve $\mathcal{P}_4$. The initial value of $\boldsymbol{\mu}$ is set to $\mu_k=1/K,\forall k \in {\mathcal K}$. Although this initial point may not fully satisfy all QoS constraints, the SQP algorithm generally possesses constraint correction capabilities that can help guide the solution towards the feasible region during iterations, resulting in a reliable solution \cite{Nocedal2006}.
\begin{remark}
The value of $t^*$ is significant in these two cases. Findings in \cite{9491092} show that the optimal $t^*$ is often much less than or equal to 1. This result aligns with literature on rate-splitting degrees of freedom \cite{9257433}\cite{8019852}, indicating that maximizing degrees of freedom requires more power for the common stream than for private streams.
\end{remark}
In this paper, we focus on improving the ergodic sum-rate performance. In situations where $t^*=0.5$, the rates of opening and closing the common stream are often close to each other. Moreover, the solution to $\mathcal{P}_1$ naturally tends to polarize toward the extremes (i.e., $t^* \approx 0$ or $t^* \approx 1$). Therefore, in our optimization settings, we treat \( t^* \leq 0.5 \) as \( t^* \ll 1 \). For \( t^* > 0.5 \), we consider \( t^* \) to be close to 1 and set \( t^* = 1 \) to terminate the common stream. While this simplification may overlook partially active common stream cases, simulations show that the proposed method still improves overall performance in most scenarios. The overall algorithm is sketched in Algorithm 1.

\begin{algorithm}[t]
\caption{The Proposed Single-Step Update Algorithm}
\label{Algorithm_P2}
\begin{algorithmic}[1]
\State Initialize $\boldsymbol{\mu}=[1/K,...,1/K]$, calculate the $t$ by solving $\mathcal{P}_1$ with SQP
\If{$t^* > 0.5$}
    \State Reset $t^* = 1$
    \State Calculate the $\boldsymbol{\mu}$ by solving $\mathcal{P}_4$ with SQP
\Else
   \State Calculate the $\boldsymbol{\mu}$ by solving $\mathcal{P}_2$ with SQP
        \State Given $t^*$, $\boldsymbol{\mu}$, calculate the $\boldsymbol{c}$ by solving $\mathcal{P}_3$ with water-filling
\EndIf
\end{algorithmic}
\end{algorithm}
\vspace{-3mm}
\subsection{Computational complexity}
In each iteration, the algorithm updates variables based on the gradient of the objective function. The expression in \eqref{R_c_lower_bound}, derived from Lemma~\ref{lemma4}, has a complexity of \( \mathcal{O}(K^2) \), where \( K \) denotes the number of vehicles. The expression in \eqref{R_k_bound} is computed \( K \) times, resulting in a total complexity of \( \mathcal{O}(K) \). Since $\mathcal{P}_3$ has a closed-form solution and its computation time is shorter compared to the others, its complexity can be disregarded. 

All other subproblems are solved via the SQP method. The complexity of each iteration depends on the total number of variables and constraints, and generally grows cubically with their sum. \( \mathcal{P}_1 \), which must be solved in both cases, involves a one-dimensional variable and a single constraint, yielding a complexity of \( \mathcal{O}(N_1 K^2) \), where \( N_1 \) is the number of iterations. Notably, this problem’s complexity is limited by the closed-form expansion. Based on its solution, either \( \mathcal{P}_2 \) or \( \mathcal{P}_4 \) is subsequently solved. Both problems involve \( K \)-dimensional optimization variables. \( \mathcal{P}_2 \) has \( 1 + 2K \) constraints and a complexity of \( \mathcal{O}(27 N_2 K^3) \), where \( N_2 \) denotes the number of iterations. \( \mathcal{P}_4 \) has \( 1 + 3K \) constraints and a complexity of \( \mathcal{O}(64 N_4 K^3) \), where \( N_4 \) is the number of iterations. Therefore, the total complexity is approximated as \( \mathcal{O}(N_1 K^2 + 27 N_2 K^3) \) or \( \mathcal{O}(N_1 K^2 + 64 N_4 K^3) \), depending on whether the common stream is enabled.

%
%

\vspace{-2mm}
\section{Simulation Results}
\label{result}
In this section, simulations are carried out to evaluate the effectiveness of our derived bounds and the enhancement of the ergodic performance. Unless otherwise specified, all configurations are as follows: the BS is configured with eight transmit antennas (\(N_t = 8\)) and serves four single-antenna vehicles (\(K = 4\)). The total transmit power $P$ is 35 dBm. The time correlation coefficient is $\varepsilon=0.5$, corresponding to vehicle speeds around $v_k=110~\rm{km/h}$. The carrier frequency is \(f_c = 5.9 \, \text{GHz}\) \cite{FCC2020}, with a bandwidth of \( B=10 \, \text{MHz}\) and a noise power density of \(\sigma_0^2 = -174 \, \text{dBm/Hz}\). The noise power is $\sigma^2 = \sigma_0^2 B$. The CSI acquisition delay is \(T = 0.4 \, \text{ms}\), the blocklength is \(l_c = l_k = 300\), and the BLER is \(\beta_{c,k} = \beta_{p,k} = 10^{-6}\) for all streams \cite{9390169}. These settings are suitable for typical vehicle applications, such as safety communication and real-time traffic information exchange \cite{9001049}. We utilize the large-scale fading model \(L(d) = \rho_0 - 10\alpha_0\lg(d/d_0)\), where \(\rho_0 = -30 \, \text{dB}\) specifies the path loss at the standard distance \(d_0 = 1 \, \text{meter}\), \(d\) represents the distance between two terminals, and \(\alpha_0 = 3.7\) indicates the path loss exponent. The minimum required QoS rate \(R_{\text{min}}\) is set at \(0.1 \, \text{bit/s/Hz}\). The positions of vehicles relative to the BS are uniformly distributed at distances between \(100\) and \(300\) meters. All simulation results were generated using a Monte Carlo method based on $10^4$ channel realizations.

\begin{figure}[t]
	\centering
	\begin{subfigure}{0.45\textwidth} 
		\centering
		\includegraphics[width=\linewidth]{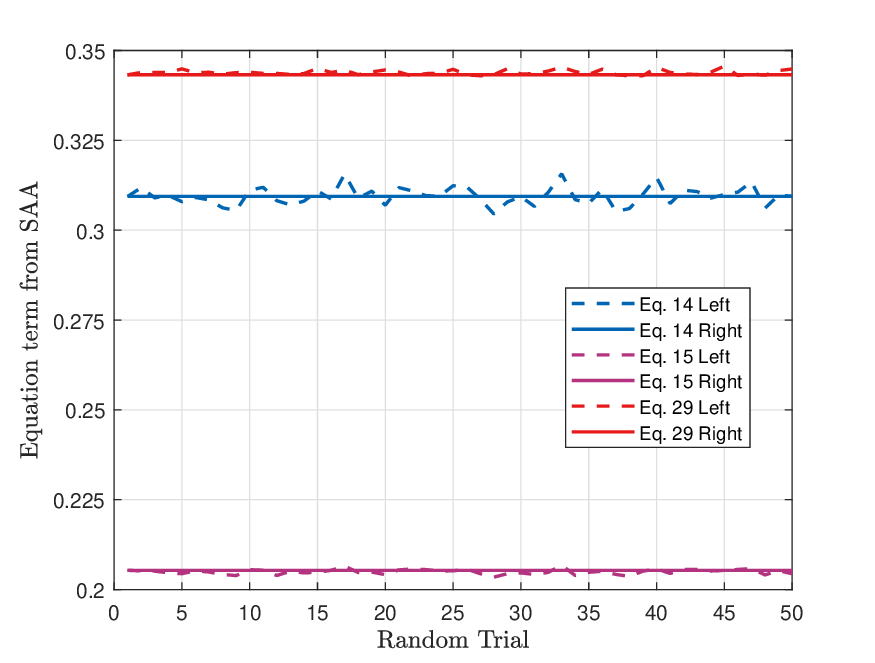} 
         \vspace{-5mm}
		\caption{SAA Computed Values for Approximate Equations.}
		\label{error1}
		\vspace{3mm}
	\end{subfigure}
	\begin{subfigure}{0.45\textwidth} 
		\centering
		\includegraphics[width=\linewidth]{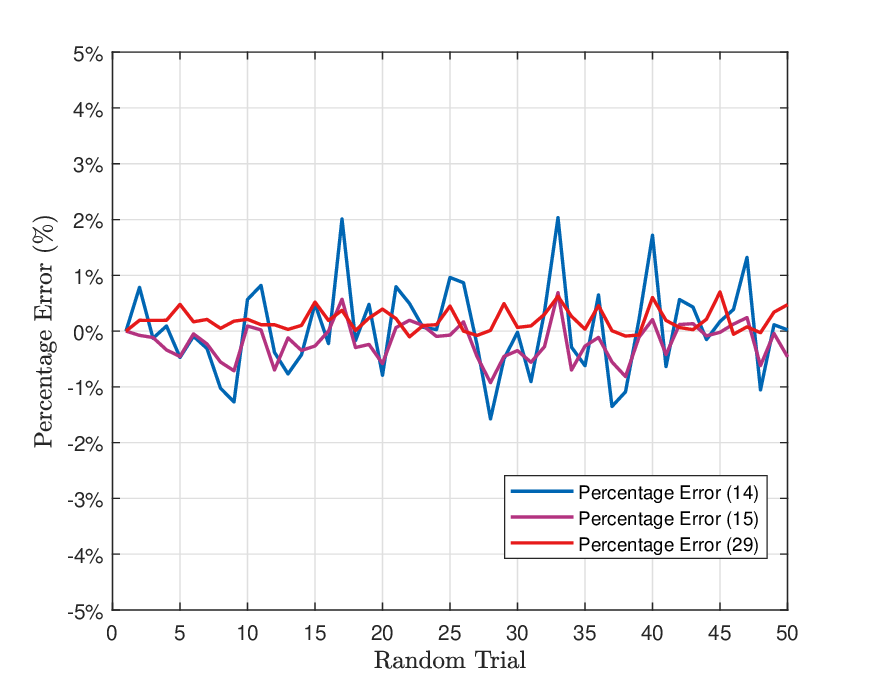} 
        \vspace{-5mm}
		\caption{Percentage Error Calculation.}
		\label{error2}
		\vspace{5mm}
	\end{subfigure}
	\caption{Approximate Data for Definitions 1 and 2.}
	\label{errors}
    \vspace{-3mm}
\end{figure}

Fig. \ref{error1} and~\ref{error2} show the experimental approximations for Definitions~1 and~2. We set $t=0.5$, $v_k = 90~\rm{km/h}$, $\zeta_{k}=1000, \forall k \in {\mathcal K} $. In Fig. \ref{error1}, the left-hand side terms of equations \eqref{Cminc}, \eqref{vminc} and \eqref{vminp} are computed from a set of $\boldsymbol{\mu}$ values generated through random dispersion, where the values satisfy the conditions $u_k = \frac{1}{K}$ and $\sum_{j \in {\mathcal K},j \ne k} u_j = 1 - \frac{1}{K}$ in each random trial. This reflects real power settings. The right-hand side terms of \eqref{Cminc}, \eqref{vminc} and \eqref{vminp} are calculated using $u_k = \frac{1}{K}$ for $\forall k \in \mathcal{K}$. 
From Fig. \ref{error1}, we can see that the left-hand and right-hand terms are essentially on the same level, validating the rationality of the approximations in Definitions 1 and 2. We further illustrate the percentage error in Fig. \ref{error2}. It can be observed that, in 50 random trials, the percentage errors for all three approximations are below 2.5\%. This further confirms the reliability of the proposed approximations.

\begin{figure}[t]
	\centering
	\begin{subfigure}{0.45\textwidth} 
		\centering
		     \vspace{4mm}
             \includegraphics[width=0.95\linewidth]{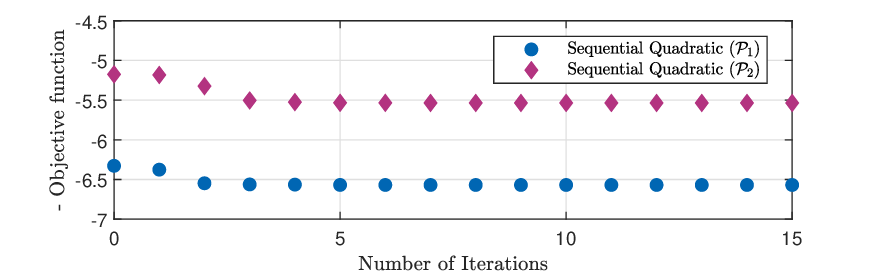} 
        \vspace{-1mm}
		\caption{$P = 35$ dBm, $\varepsilon = 0.8$.}
		\label{fig1a}
		\vspace{0.5cm}
	\end{subfigure}
	\begin{subfigure}{0.45\textwidth} 
		\centering
		\includegraphics[width=0.95\linewidth]{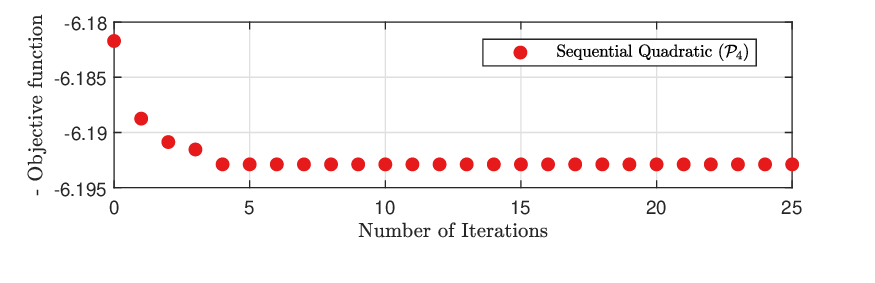} 
        \vspace{-4mm}
		\caption{$P = 30$ dBm, $\varepsilon = 0.8$.}
		\label{fig1b}
		\vspace{0.5cm}
	\end{subfigure}
	\caption{Convergence of the proposed algorithms.}
	\label{fig1}
    \vspace{-2mm}
\end{figure}
In Fig.~\ref{fig1a} and \ref{fig1b}, we illustrate the convergence of the iterative algorithm used to solve problems $\mathcal{P}_1$, $\mathcal{P}_2$, and $\mathcal{P}_4$. Since the default algorithm is primarily for solving minimization problems, we minimize the negative of the objective function to achieve maximization of the original problem. From the figure, it is evident that whether the common stream is activated to handle $\mathcal{P}_1$ and $\mathcal{P}_2$, or deactivated to handle $\mathcal{P}_4$, the solution algorithm can converge to a local minimum value with typically around 5 iterations. This demonstrates that the algorithm converges as the number of iterations increases.

\begin{figure}[t]
	\label{error}
	\centering
	\begin{subfigure}{0.45\textwidth} 
		\centering
		\includegraphics[width=\linewidth]{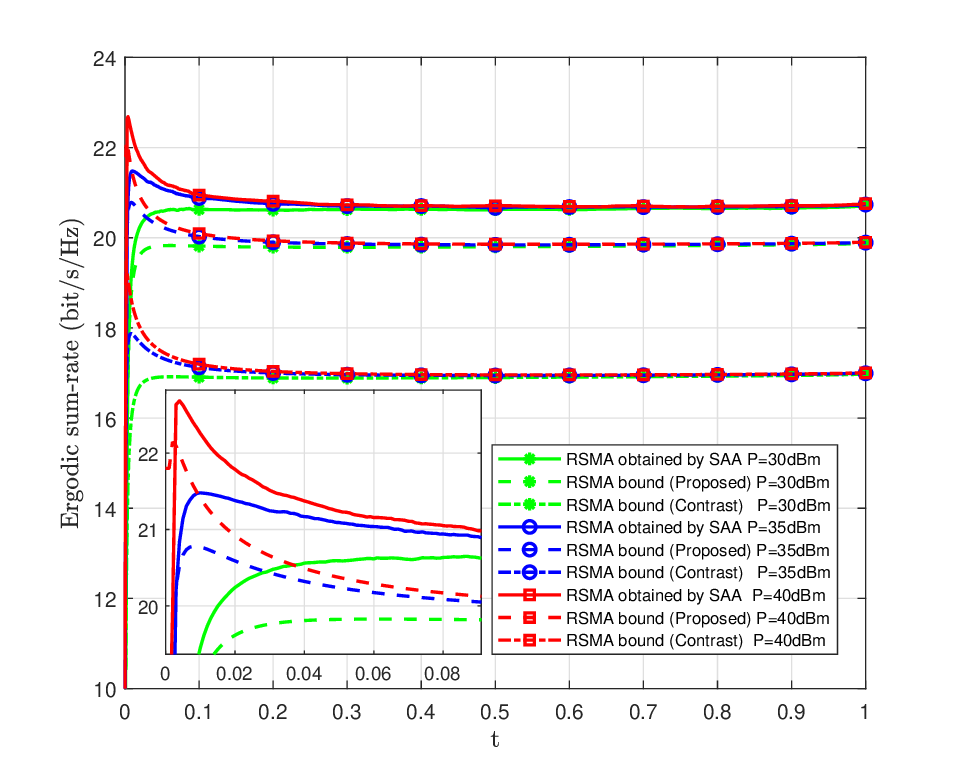} 
        \vspace{-6mm}
		\caption{$N_t$=32, $K$=8, $\varepsilon$=0.8.}
		\label{fig2a}
		\vspace{3mm}
	\end{subfigure}
	\begin{subfigure}{0.45\textwidth} 
		\centering
		\includegraphics[width=\linewidth]{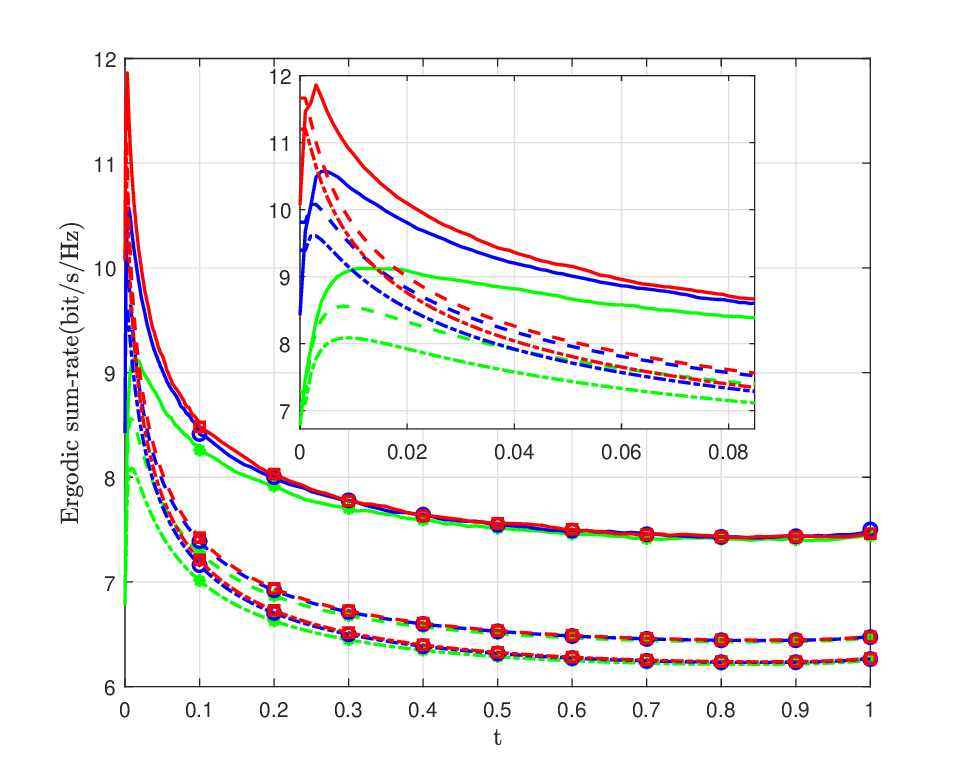} 
          \vspace{-6mm}
		\caption{$N_t$=8, $K$=4, $\varepsilon$=0.8.}
		\label{fig2b}
		\vspace{5mm}
	\end{subfigure}
	\caption{SAA, Proposed and contrast bound.}
	\label{fig2}
    \vspace{-2mm}
\end{figure}
Then, we examine the performance of the derived lower bound via comparisons. Specifically, we compare it with the simulated ergodic sum-rate via SAA with $10^4$ channel realizations and the latest RSMA closed-form bound under the FBL regime provided by \cite{10478577}. This particular bound provides a different closed-form expression for the private stream. However, it does not extend to scenarios involving large-scale fading random distribution, and some points are non-differentiable. In the context of the same large-scale loss of vehicles, the optimal allocation strategy for private streams is equal theoretically. So we set $\zeta_{k}=1000$, $\mu_k=1/K$, $\forall k \in {\mathcal K}$. Then, we can calculate the optimal ergodic sum-rate by exhausting $t$ to demonstrate the optimal performance achieved by RSMA visually. 

In Fig.~\ref{fig2a} and \ref{fig2b}, we display the ergodic sum-rate \eqref{eqn:sumrate} calculated by SAA, the derived lower bound calculated by \eqref{R_c_lower_bound} and \eqref{R_k_bound}, and the closed-form contrast bound in \cite{10478577}. The results show that the proposed lower bound shows a superior approximation of the actual ergodic sum-rate. Notably, the RSMA bound (Proposed) is tighter in 32-antenna systems and consistently outperforms the RSMA bound (Contrast) even in the 8-antenna case. We denote the global power coefficient \( t \), which maximizes the ergodic sum-rate, as \( t_{\text{opt}} \). As the transmit power increases, both 32-antenna and 8-antenna BS exhibit a trend where \( t_{\text{opt}} \) decreases. This decrease in \( t_{\text{opt}} \) leads to an increase in the proportion allocated to common streams for interference mitigation. Under identical mobility conditions and total transmit power, the \( t_{\text{opt}} \) for the 32-antenna BS is relatively larger, as the additional antennas offer greater degrees of freedom to mitigate interference. Additionally, we observe that when the ergodic sum-rate reaches its maximum, the optimal global power coefficient \( t_{\text{opt}} \) typically approaches 0 or 1.
This finding supports the rationale for problem-solving in Section IV-B, which provides segmenting discussions based on the activation status of the common streams.

In the following, we examine the ergodic performance enhancement with the obtained bounds and the proposed algorithm. Specifically, we compare five schemes:  
\begin{itemize}
    \item Rate-splitting with optimized power and common rate splitting, where $t$, $\boldsymbol{\mu}$ and $\mathbf{c}$ are optimized using the proposed single-step update algorithm, termed RSMA Proposed.
    \item Rate-splitting with equal private power distribution and common rate splitting, where \(t^*\) is obtained by solving $\mathcal{P}_1$, termed RSMA Proposed Equal.
    \item Rate-splitting with equal private power distribution and common rate splitting, but \(t^*\) is determined through an exhaustive search to maximize the ergodic sum-rate within the interval (0,1] with a granularity of 0.001, termed RSMA Exhaustive Equal. 
    \item  Space division multiple access with equal power allocation, termed SDMA.
    \item Space division multiple access with exhaustive private power allocation over \(\mu_k \in [0,1]\) with granularity 0.025, termed SDMA Exhaustive.
    \item Non-orthogonal multiple access based on inter-group spatial division multiplexing. In this scheme, all users apply one layer of SIC and are paired according to their proximity, while inter-group power is averaged. For intra-group power, we utilize a power distribution factor derived from fractional transmit power allocation \cite[Eq. 9]{7511620}, setting this parameter to 0.8, termed NOMA.
    \item Non-orthogonal multiple access based on inter-group spatial division multiplexing, with exhaustive intra- and inter-group power allocation (granularity 0.025). All users apply one layer of SIC and are paired by proximity, termed NOMA Exhaustive.
\end{itemize}

To better illustrate the performance of RSMA, we utilize the left singular vector corresponding to the largest eigenvalue of the channel matrix $\mathbf{H}_1 = [\mathbf{h}_1[m-1], \mathbf{h}_2[m-1], \ldots, \mathbf{h}_K[m-1]]$. This left singular vector is then applied to the common stream in all RSMA schemes.
\begin{figure}[t]
	\centering
	\begin{subfigure}{0.45\textwidth} 
		\centering
		\includegraphics[width=\linewidth]{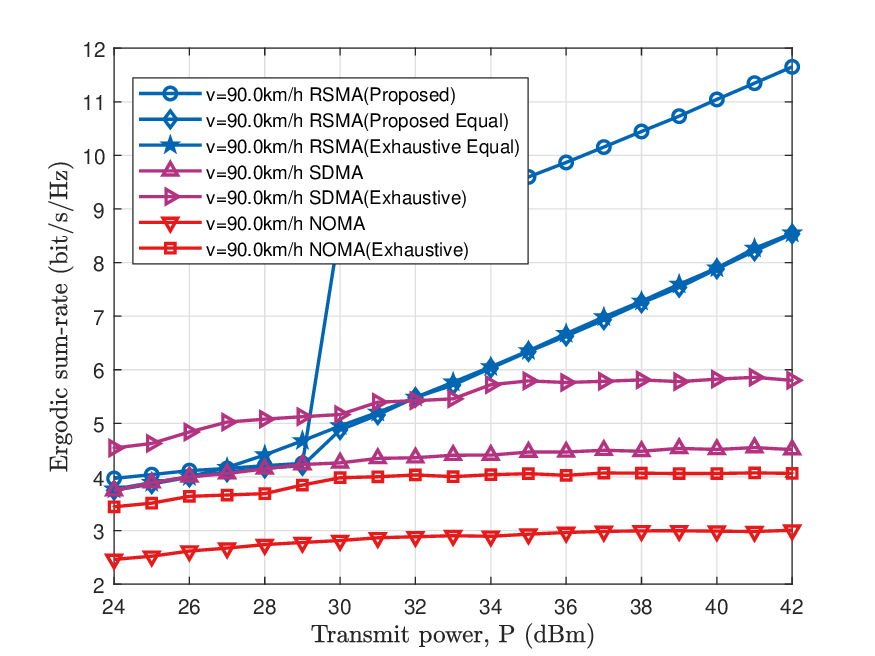} 
        \vspace{-6mm}
		\caption{Vehicle velocity $v_k = 90~\rm{km/h}$.}
		\label{fig3a}
		\vspace{3mm}
	\end{subfigure}
	\begin{subfigure}{0.45\textwidth} 
		\centering
		\includegraphics[width=\linewidth]{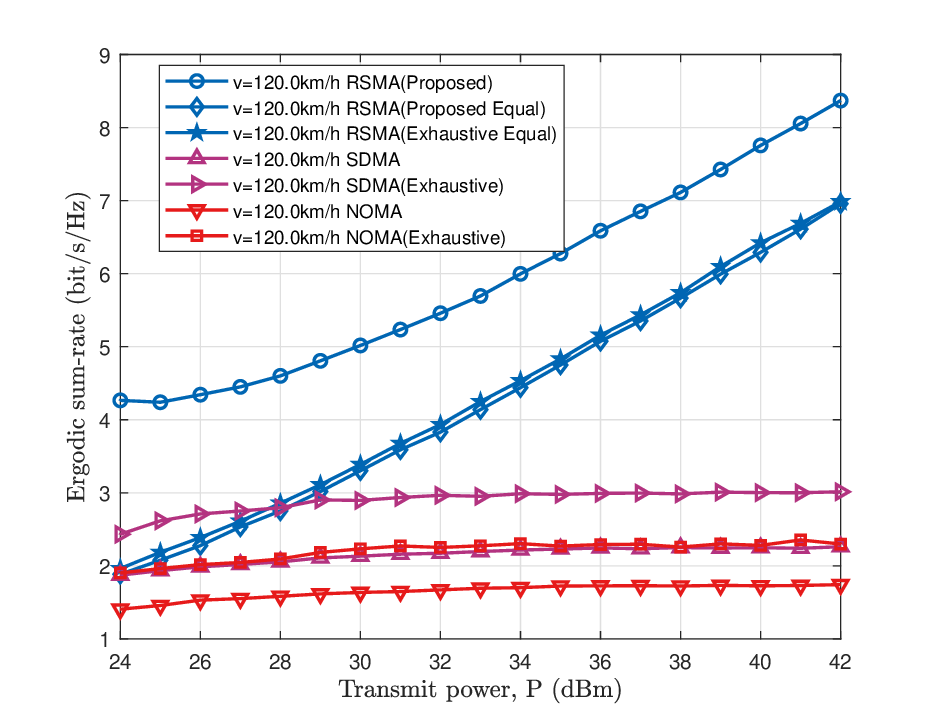} 
         \vspace{-6mm}
		\caption{Vehicle  velocity $v_k = 120~\rm{km/h}$.}
		\label{fig3b}
		\vspace{5mm}
	\end{subfigure}
	\caption{Ergodic sum-rate versus transmit power under different vehicle velocities.}
     \vspace{-2mm}
	\label{Fig3}
\end{figure}

Fig. \ref{fig3a} and \ref{fig3b} illustrate the trends of ergodic sum-rate versus transmit power. It is noteworthy that all RSMA schemes outperform both SDMA and NOMA in the high transmit power regime. This advantage stems from RSMA's inherent ability to effectively manage imperfect channel state information, while SDMA and NOMA are limited in achieving perfect spatial division multiplexing due to imperfect CSI. In contrast, RSMA balances interference and noise through common stream rate splitting and decoding, demonstrating its robustness. Comparing Fig.~\ref{fig3a} and \ref{fig3b}, it can be observed that the ergodic sum-rate of all schemes at 120 km/h is lower than that at 90 km/h. This is due to the increased channel estimation errors at higher speeds. However, RSMA (Proposed), through additional optimization of the private power distributions and the common rate splitting, significantly outperforms other RSMA schemes under average settings. Additionally, RSMA (Proposed Equal), based on solving $\mathcal{P}_1$ with lower bound, and RSMA (Exhaustive Equal), using Monte Carlo methods, demonstrate similar performance under average private streams. This validates the accuracy of the derived closed-form expression. In Fig. \ref{fig3a}, it can be observed that the RSMA (Proposed) achieves a significant performance gain under the condition of 30 dBm. This is because, at this point, the value of \( t^* \) obtained from solving $\mathcal{P}_1$ is less than 0.5. This indicates that the benefit of allocating power to the common stream surpasses that of the private streams, leading to the activation of the common stream. A detailed discussion of the impact of vehicle speeds and transmit powers on the global power coefficient is provided in the discussion of Fig.~\ref{Fig5}. 
\begin{figure}[t]
	\centering
	\begin{subfigure}{0.45\textwidth} 
		\centering
		\includegraphics[width=\linewidth]{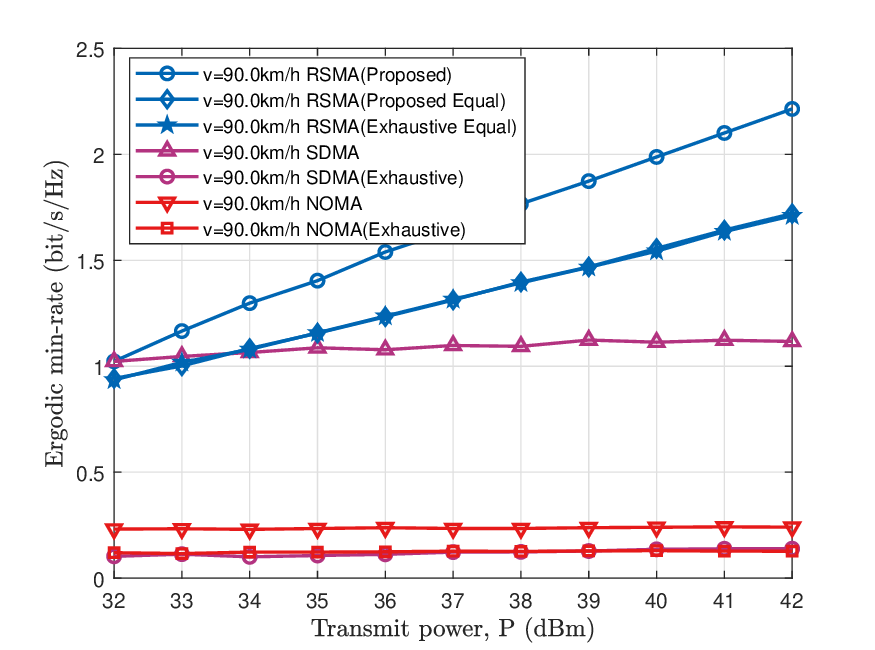} 
         \vspace{-6mm}
		\caption{Vehicle  velocity $v_k = 90~\rm{km/h}$.}
		\label{fig4a}
		\vspace{3mm}
	\end{subfigure}
	\begin{subfigure}{0.45\textwidth} 
		\centering
		\includegraphics[width=\linewidth]{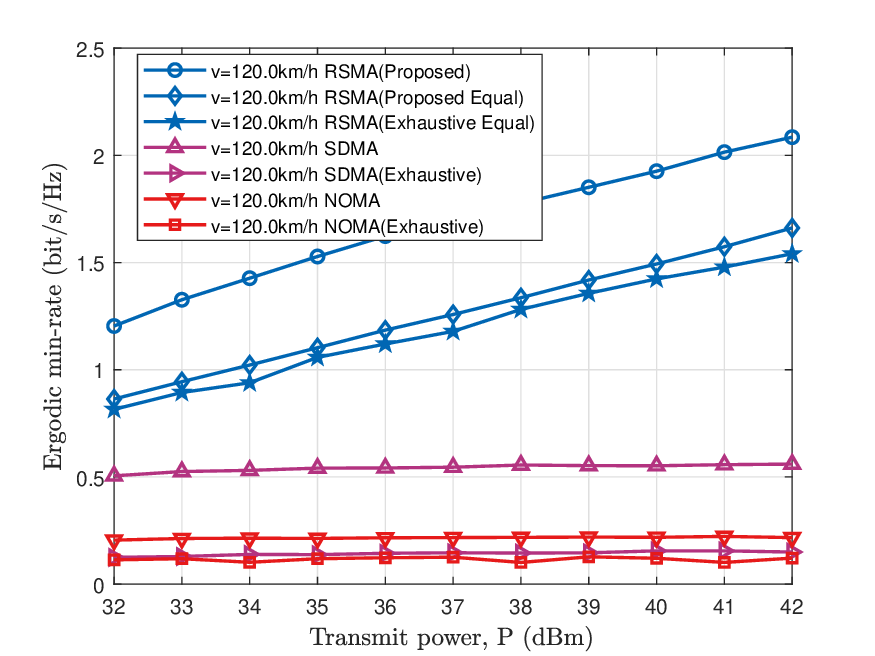} 
          \vspace{-6mm}
		\caption{Vehicle  velocity $v_k = 120~\rm{km/h}$.}
		\label{fig4b}
		\vspace{5mm}
	\end{subfigure}
	\caption{Ergodic min-rate versus transmit power.}
	\label{Fig4}
      \vspace{-2mm}
\end{figure}

In Fig. \ref{fig4a} and \ref{fig4b}, we can observe the ergodic min-rate versus transmit power under the same conditions. It is evident that the proposed scheme surpasses the RSMA with average common stream rate. This is because, based on the optimized \( t^* \), we further jointly optimize the private power distribution and the common rate splitting. By combing Fig. \ref{Fig3} and \ref{Fig4}, it is evident that the proposed scheme can enhance ergodic sum-rate performance by allocating additional power to strong users for the private streams, while also ensuring fairness through a more equitable distribution of the common stream rate. In contrast, SDMA (Exhaustive) and NOMA (Exhaustive) reduce the min-rate to near the QoS threshold in pursuit of higher sum-rate. Consequently, their min-rate performance falls below that of the baseline SDMA and NOMA schemes with fixed parameters.

\begin{figure}[t]\centerline{\includegraphics[width=0.45\textwidth,keepaspectratio]{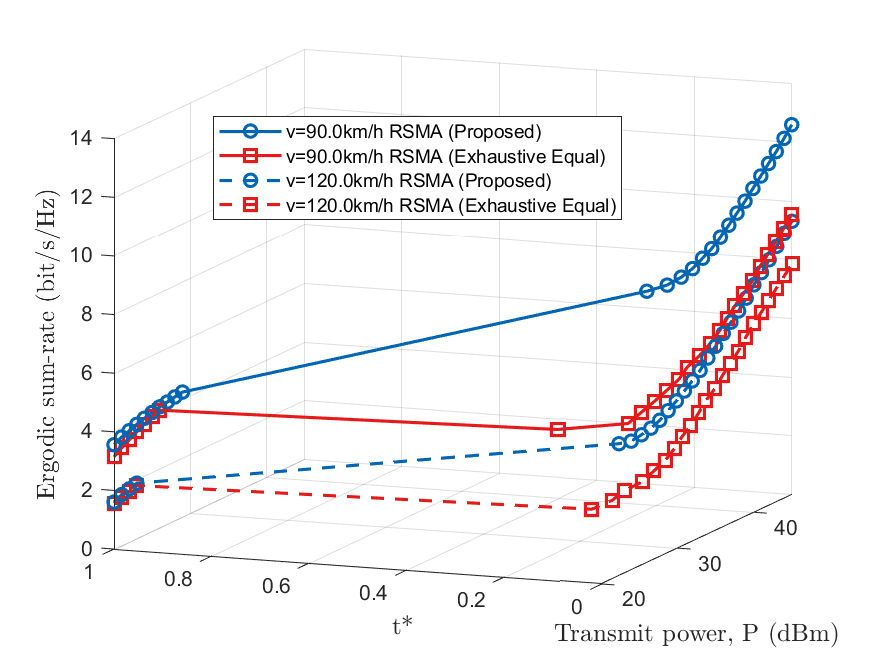}}
	\caption {Ergodic sum-rate versus optimized power ratio $t^*$ and transmit power $P$ under different vehicle velocities.}
	\label{Fig5}
      \vspace{-2mm}
\end{figure}

In Fig. \ref{Fig5}, we display the ergodic sum-rate versus the optimized global power coefficient $t^*$ and the transmit power $P$. Explicitly, $1-t^*$ specifies the optimized power ratio for the common stream, which significantly influences the overall performance. It can be observed that at relatively low transmit power levels, the scheme deactivates the common stream. Once the common stream is activated, as the transmit power increases, the proportion of power allocated to the common stream grows. Additionally, at a vehicle speed of 120 km/h, more prominent interference causes the common stream to be activated at a lower transmit power level. The increase in both transmit power and mobility intensifies the interference among different private streams. Due to imperfect CSIT, zero-forcing cannot eliminate this interference. As a result, allocating power to the common stream provides significant performance gains. Moreover, the $t^*$ value of the RSMA (Proposed) closely aligns with that of RSMA (Exhaustive Equal). This observation confirms the reliability of the derived closed-form solution and demonstrates that the algorithm used to solve $\mathcal{P}_1$ can accurately determine the global power coefficient.

 \begin{figure}[t]
	\centerline{\includegraphics[width=0.45\textwidth,keepaspectratio]{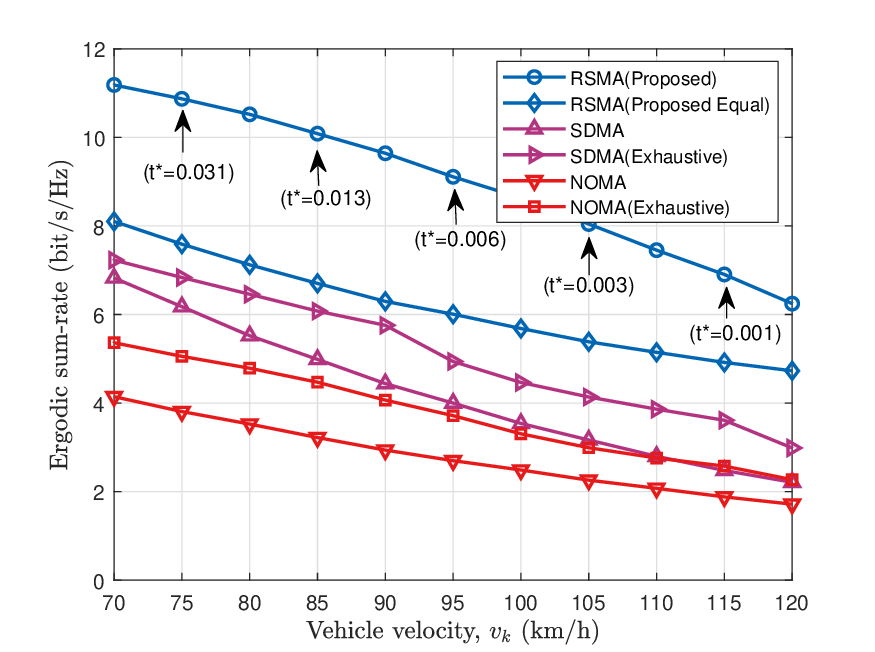}}
	\caption {Ergodic sum-rate versus vehicle velocity.}
	\label{Fig6}
      \vspace{-2mm}
\end{figure}

Fig. \ref{Fig6} depicts the ergodic sum-rate versus vehicle velocity under different schemes. It can be observed that, though the increased velocity degrades the performance overall, RSMA (Proposed) maintains its superior gain in the high-mobility scenario. This demonstrates that the proposed scheme can improve the sum rate, even at higher vehicle velocities, ensuring robust performance under high-mobility conditions. Besides, it is noticed in Fig. \ref{Fig6} that a lower optimized global power coefficient $t^*$ is associated with a higher vehicle velocity $v_k$ in the RSMA (Proposed) scheme. As vehicle velocity increases, the channel estimation error increases, exacerbating inter-user interference between private streams. To mitigate this interference, RSMA tends to allocate more power to the common stream, leading to an increase in $1-t^*$. However, the increase in common stream power also implies reduced power to private streams. This reduction diminishes the performance gain from optimizing private stream power allocation, thereby narrowing the gap between the two RSMA schemes. In contrast, both SDMA and NOMA schemes cannot eliminate interference between vehicles and inter-group interference, resulting in poor performance.

 \begin{figure}[t]
	\centerline{\includegraphics[width=0.45\textwidth,keepaspectratio]{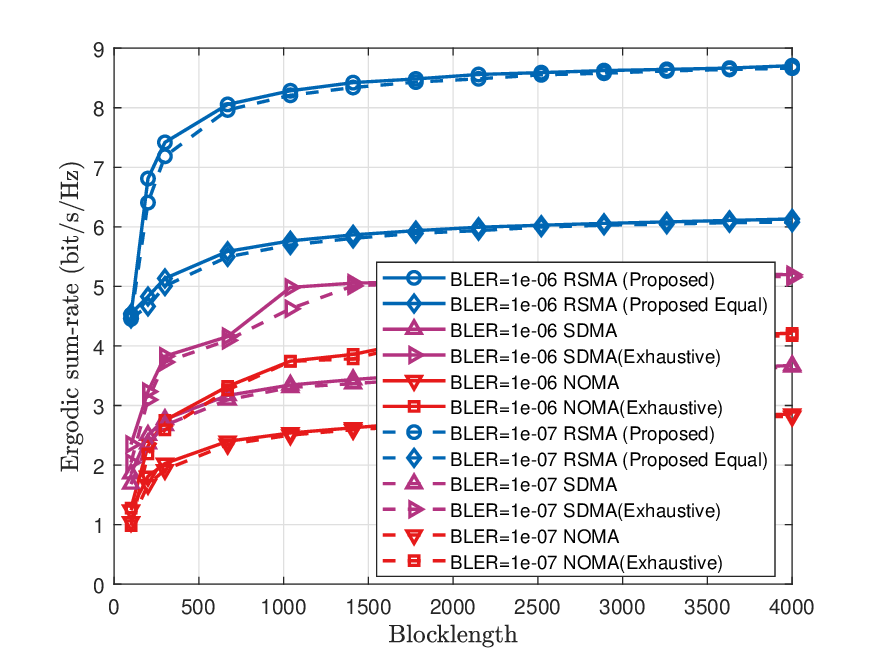}}
	\caption {Ergodic sum-rate versus blocklength under different BLERs.}
	\label{Fig9}
      \vspace{-2mm}
\end{figure}


Fig. \ref{Fig9} displays the performance of the proposed solution under varying blocklengths and BLERs (\(10^{-6}\) and \(10^{-7}\)). As the blocklength increases, performance improves for a fixed BLER, and the proposed scheme consistently maintains the best performance. This indicates that the closed-form solution can effectively monitor the impact of actual FBL. Additionally, the proposed scheme can achieve shorter blocklengths and lower bit error rates without compromising the transmission rate.
  \vspace{-2mm}
\section{Conclusion}
\label{conclusion}

In this work, we have investigated the ergodic sum-rate performance of RSMA in high-mobility autonomous driving scenarios under FBL constraints. We have derived a closed-form lower bound expression considering key system parameters such as the number of transmit antennas, vehicle velocities, power allocation, blocklength, and error rates. Leveraging this expression, we have proposed a gradient-based optimization framework for power allocation and rate splitting, complemented by SQP to ensure QoS and BLER requirements are met. Simulation results have demonstrated that the proposed RSMA scheme significantly improves the ergodic sum-rate while reducing blocklength and error probability, guaranteeing reliable communication even for the worst-case users.

Future work will focus on improving the transmitter framework in two key directions. One is to develop more robust and generalized optimization algorithms to enhance performance stability and solution quality. The other is to investigate multi-layer RSMA designs under more practical condistions such as varying user mobility patterns, spatially correlated fading and realistic channel maps, which can further enhance the robustness compared to single-layer schemes.


\appendices
  \vspace{-2mm}
	\section{PROOF OF LEMMA 1}
By substituting \eqref{Approximation1} and \eqref{Approximation2} into the expression of \(X\), we obtain the following
\begin{equation}
\small
\sum_{j\in\mathcal{K}} {{{\left| {{\mathbf{h}}_k^H[m]{{\mathbf{p}}_j}} \right|}^2}}\approx{\varepsilon ^2}{\left| {{\mathbf{h}}_k^H[m - 1]{{\mathbf{p}}_k}} \right|^2} + (1 - {\varepsilon ^2})\sum_{j\in\mathcal{K}}{\left| {{\mathbf{e}}_k^H[m]{{\mathbf{p}}_j}} \right|^2}
\end{equation}
Since \(\left|\mathbf{h}_k^H[m - 1]\mathbf{p}_k\right|^2\sim Gamma(N_t - K + 1, 1)\) and \(\left|\mathbf{e}_j^H[m]\mathbf{p}_k\right|^2 \sim Gamma(1, 1)\), we can further derive $\varepsilon^2|\mathbf{h}_k^H[m-1]\mathbf{p}_k|^2 \sim {Gamma}(N_t - K + 1, \varepsilon^2)$ and $(1 - \varepsilon^2)\sum_{j\in\mathcal{K}}|\mathbf{e}_k^H[m]\mathbf{p}_j|^2 \sim {Gamma}(K, 1 - \varepsilon^2)$. Substituting these two terms back into the \eqref{gammajinsi}, we obtain \eqref{eqn:dandtheta}.
  \vspace{-2mm}
\section{PROOF OF LEMMA 2}
We initiate the proof by deriving the CDF of the r.v. \(Y_{c,k}\). Firstly, we reformulate \(Y_{c,k}\) into
\begin{align}
	Y_{c,k}=\frac{Q_{k1}}{1+\frac{P\widetilde{\theta}t{\zeta_{k}}}{K}Q_{k2}},
\end{align}
where \(Q_{k1}=|\mathbf{h}^{H}_{k}[m]\mathbf{p}_{c}|^{2}\) and
\(Q_{k2}=\frac{1}{\widetilde{\theta}}\sum_{j\in\mathcal{K}} |\mathbf{h}^{H}_{k}[m]\mathbf{p}_{j}|^{2}\).
As mentioned before, \(Q_{k1}=|\mathbf{h}^{H}_{k}[m]\mathbf{p}_{c}|^{2}\sim {Gamma}(1,1)\). According to Lemma \ref{lemma3}, the r.v. \(X=\sum_{j\in\mathcal{K}} |\mathbf{h}^{H}_{k}[m]\mathbf{p}_{j}|^{2}\) can be approximated by \(\widetilde{X} \sim {Gamma}(\widetilde{D},\widetilde{\theta})\) with \(\widetilde{D}\) and \(\widetilde{\theta}\) provided in \eqref{eqn:dandtheta}. Using this approximation, \(Q_{k2}\) can be represented by \(\widetilde{Q}_{k2} \sim {Gamma}(\widetilde{D},1)\). Assuming the independence of the numerator and the denominator terms \cite{9491092}, we can approximate \(Y_{c,k}\) with \(\widetilde{Y}_{c,k}\), whose CDF is given by 
\begin{small}
	\begin{align}
		F_{\widetilde{Y}_{c,k}}&(y)=\int_{0}^{\infty}P\left(Q_{k1}<y\left( 1+\frac{P\widetilde{\theta}t{\zeta_{k}}}{K}q\right) \right) f_{\widetilde{Q}_{k2}}(q) \ \mathrm{d}q \nonumber\\
		&=\int_{0}^{\infty}\left(1-e^{-y\left(1+ \frac{P\widetilde{\theta}t{\zeta_{k}}}{K}q\right)} \right) \frac{e^{-q}q^{\widetilde{D}-1}}{\Gamma(\widetilde{D})} \ \mathrm{d}q \nonumber\\
		&=\int_{0}^{\infty}\frac{e^{-q}q^{\widetilde{D}-1}}{\Gamma(\widetilde{D})} \ \mathrm{d}q -\frac{e^{-y}}{\Gamma(\widetilde{D})}\int_{0}^{\infty}e^{-q\left(y\frac{P\widetilde{\theta}t{\zeta_{k}}}{K}+1\right)}q^{\widetilde{D}-1} \ \mathrm{d}q \nonumber\\
		&=1 -\frac{e^{-y}}{\Gamma(\widetilde{D})}\int_{0}^{\infty}e^{-q\left(y\frac{P\widetilde{\theta}t{\zeta_{k}}}{K}+1\right)}q^{\widetilde{D}-1} \ \mathrm{d}q.
		\label{eqn:cdf_1}
	\end{align} 
\end{small}Applying a variable substitution $z=q\left(y\frac{P\widetilde{\theta}t{\zeta_{k}}}{K}+1 \right)$, \eqref{eqn:cdf_1} can be rewritten as,
\begin{align}
	&F_{\widetilde{Y}_{c,k}}(y) \nonumber \\
	&=1-\frac{e^{-y}}{\Gamma(\widetilde{D})}\int_{0}^{\infty}e^{-z}\frac{z^{\widetilde{D}-1}}{\left(y\frac{P\widetilde{\theta}t{\zeta_{k}}}{K}+1 \right)^{(\widetilde{D}-1)} }\frac{ \ \mathrm{d}z}{\left(y\frac{P\widetilde{\theta}t{\zeta_{k}}}{K}+1 \right)} \nonumber\\
	&=1 -\frac{e^{-y}}{\left(y\frac{P\widetilde{\theta}t{\zeta_{k}}}{K}+1 \right)^{\widetilde{D}}}\int_{0}^{\infty}\frac{e^{-z}z^{\widetilde{D}-1}}{\Gamma(\widetilde{D})} \ \mathrm{d}z \nonumber \\
	&=1 -\frac{e^{-y}}{\left(y\frac{P\widetilde{\theta}t{\zeta_{k}}}{K}+1 \right)^{\widetilde{D}}}, 
\end{align} 
which is valid for $y \in [0,\infty)$. Consequently, the CDF of $P(1-t)\zeta_{k}\widetilde{Y}_{c,k}$ can be obatined as
\begin{align}
F_{P(1-t)\zeta_{k}\widetilde{Y}_{c,k}}(y)=1 -\frac{e^{-\frac{y}{P(1-t){\zeta_{k}}}}}{\left(y\frac{\widetilde{\theta}t}{{(1-t)}K}+1 \right)^{\widetilde{D}}}.
\end{align}Assuming $Y_{c,k}$ are independent for each user, we can approximate $P(1-t)\min_{k\in\mathcal{K}}(\zeta_{k}Y_{c,k})$ by $\widetilde \Gamma_{c}$ with the CDF $F_{\widetilde\Gamma_{c}}(y)=1-\prod_{k=1}^{K} (1-F_{{P(1-t)\zeta_{k}\widetilde{Y}_{c,k}}}(y))$ and get \eqref{CDF_c}.

\vspace{-2mm}
\section{PROOF OF LEMMA6}
We reformulate \( Y_{p,k} \) in \eqref{Cp_jinsi} as 
\[
Y_{p,k} = \frac{P t \zeta_{k} \mu_k M_k}{P t \zeta_{k} N_k + 1},
\]
where $M_k = {\left| {{\mathbf{h}}_k^H[m]{{\mathbf{p}}_k}} \right|^2}$ and $N_k = \frac{1 - \mu_k}{K - 1} \sum\nolimits_{j \in {\mathcal K}\backslash k} {{{\left| {{\mathbf{h}}_k^H[m]{{\mathbf{p}}_j}} \right|}^2}}$. 
Assuming that the numerator and the denominator are independent, we have
\begin{align}
	\mathbb{E}&\left[ {{{Y }_{p,k}}} \right]=\mathbb{E}\left[ {{Pt\zeta_{k}{\mu_{k}}{M}_{k}}} \right]\mathbb{E}\left[\frac{1}{ {{{P}{t}{\zeta_{k}}{N}_{k}+1}} }\right].
\end{align}
Next, we propose two approximations. The r.v. $M_k$ can be approximated by the r.v. $\widetilde M_k$ with the distribution $Gamma({\widetilde D_{M_k}},{\widetilde \theta _{M_k}})$, where \begin{align}
		\label{d3o3}
		\begin{gathered}
			{\widetilde D_{M_k}} = \frac{{{{\left[ {\left( {{N_t} - K + 1} \right){\varepsilon ^2} + (1 - {\varepsilon ^2})} \right]}^2}}}{{\left( {{N_t} - K + 1} \right){\varepsilon ^4} + {{(1 - {\varepsilon ^2})}^2}}},  \hfill \\
			{\widetilde \theta _{M_k}} = \frac{{\left( {{N_t} - K + 1} \right){\varepsilon ^4} + {{(1 - {\varepsilon ^2})}^2}}}{{\left( {{N_t} - K + 1} \right){\varepsilon ^2} + (1 - {\varepsilon ^2})}}.  \hfill \\ 
		\end{gathered}		
	\end{align}
And the r.v. $N_k$ can be approximated by the r.v. $\widetilde N_k$ with the distribution ${Gamma}(K-1, \frac{(1-\mu_k)(1-\varepsilon^2)}{K-1})$. The proofs of these two approximations are similar to Lemma \ref{Xjinsi} and are omitted for brevity.


Therefore, we can write
  \begin{subequations} 
 	\begin{align}
\mathbb{E}\left[ {{{Y }_{p,k}}} \right]&\nonumber\approx\mathbb{E}\left[ {{Pt\zeta_{k}{\mu_{k}}{\widetilde M}_{k}}} \right]\mathbb{E}\left[\frac{1}{ {{{P}{t}{\zeta_{k}}\widetilde{N}_{k}+1}} }\right]\\&\nonumber=\mathbb{E}\left[ {{Pt\zeta_{k}{\mu_{k}}{\widehat M}_{k}}} \right]\mathbb{E}\left[\frac{1}{ {{{P}{t}{\zeta_{k}}\widetilde{N}_{k}+1}} }\right]\\&\nonumber=\mathbb{E}\left[\frac{{{Pt\zeta_{k}{\mu_{k}}\widehat{M}_{k}}} }{{{{P}{t}{\zeta_{k}}\widetilde{N}_{k}+1}}}\right],
 	\end{align}
 \end{subequations}
where $\widehat M_k$ follows a ${Gamma}(1, \widetilde{D}_{M_k} \widetilde{\theta}_{M_k})$ distribution. Note that the r.v. $\widehat{M}_k$ and the r.v. $\widetilde M_k$ have the same expectation. Introducing $\widehat{M}_k$, which has a shape parameter of 1, is more convenient for our subsequent derivations. Let \(\widetilde{\Gamma}_{p,k}\) be the random variable defined by \(\frac{P t \zeta_k \mu_k \widehat{M}_k}{P t \zeta_k \widetilde{N}_k + 1}\). 

At this point, we note that \( \widetilde{\Gamma}_{p,k} \) has a form completely similar to \( P(1-t)\zeta_{k}\widetilde{Y}_{c,k} \). Therefore, the subsequent proof can follow exactly the same steps as the proof of Lemma \ref{lemma3} to derive the CDF of \( \widetilde{\Gamma}_{p,k} \). Then, based on the obtained CDF and by referring to Lemma \ref{lemma4}, we ultimately derive the expectation of \( \widetilde{\Gamma}_{p,k} \). Since these two steps follow the same reasoning as the previous lemmas, we omit the details for brevity. As a result, we ultimately obtain equation \eqref{mean_Gamma_p_k}.

\ifCLASSOPTIONcaptionsoff
  \newpage
\fi

\bibliographystyle{IEEEtran}

\balance
\bibliography{reference.bib}

\end{document}